\documentclass[11pt,reqno]{amsart}

\usepackage[margin=1.15in]{geometry}
\usepackage{amsmath,amssymb,amsthm}
\usepackage{mathtools}
\usepackage{booktabs}
 \usepackage{tikz}

\usepackage{xcolor}
\definecolor{darkred}{rgb}{0.45,0,0}
\definecolor{darkblue}{rgb}{0.2, 0.2, 0.9}
\definecolor{darkorange}{rgb}{0.8, 0.4, 0.1}
\definecolor{darkgreen}{rgb}{0.2, 0.7, 0.2}
\definecolor{brickred}{rgb}{0.9,0,0}
\definecolor{pink}{rgb}{0.95, 0.6, 0.7}
\definecolor{palepink}{rgb}{0.98, 0.85, 0.88}
\definecolor{palepurple}{rgb}{0.92, 0.85, 0.95}

\definecolor{shadecolor}{rgb}{1,0.8,0.3}
\definecolor{myurlcolor}{rgb}{0.5,0,0}
\definecolor{mycitecolor}{rgb}{0,0,0.9}
\definecolor{myrefcolor}{rgb}{0,0,0.8}
\definecolor{hyperrefcolor}{rgb}{0.5,0,0}

\usepackage[
    colorlinks, 
    citecolor=blue, 
    urlcolor=darkred, 
    final, 
    hyperindex, 
    pagebackref, 
    linkcolor = darkblue
]{hyperref}

\newcommand{\define}[1]{{\bf \boldmath{#1}}\index{#1}}

\newtheorem{theorem}{Theorem}
\newtheorem{proposition}[theorem]{Proposition}
\newtheorem{lemma}[theorem]{Lemma}

\theoremstyle{definition}

\theoremstyle{remark}

\newcommand{\C}{\mathbb{C}}
\newcommand{\R}{\mathbb{R}}
\newcommand{\Z}{\mathbb{Z}}

\newcommand{\g}{\mathfrak{g}}
\newcommand{\h}{\mathfrak{h}}
\newcommand{\e}{\mathfrak{e}}
\newcommand{\so}{\mathfrak{so}}
\renewcommand{\sl}{\mathfrak{sl}}

\newcommand{\gsm}{\g_{\mathrm{SM}}}
\newcommand{\sm}{\mathrm{SM}}
\newcommand{\m}{\mathfrak{m}}
\newcommand{\s}{\mathfrak{s}}

\newcommand{\SU}{\mathrm{SU}}
\newcommand{\U}{\mathrm{U}}
\newcommand{\A}{\mathrm{A}}
\newcommand{\D}{\mathrm{D}}
\newcommand{\E}{\mathrm{E}}
\newcommand{\GSM}{G_{\mathrm{SM}}}

\newcommand{\ip}[2]{\langle #1, #2 \rangle}
\newcommand{\gen}{\mathrm{gen}}
\newcommand{\ad}{\mathrm{ad}}
\newcommand{\maps}{\colon}

\newcommand{\ubar}{\overline{u}} 
\newcommand{\dbar}{\overline{d}} 
\newcommand{\nubar}{\overline{\nu}} 
\newcommand{\cbar}{{\overline{c}}} 

\title{Three Generations In E$_7$}
\author{John C.\ Baez}
\address{
School of Mathematics, University of Edinburgh, \\
James Clerk Maxwell Building, Peter Guthrie Tait Road, Edinburgh, UK EH9 3FD}
\address{
Department of Mathematics, University of California, Riverside CA, USA 92521 }
\date{\today}

\begin{document}

\maketitle
\vspace{-3em}

\begin{abstract}
\noindent
Starting from the Standard Model Lie algebra $\gsm = \sl(3) \oplus \sl(2) \oplus \C$
sitting inside the complex Lie algebra $\e_7$, we show how to decompose $\e_7$ 
into the direct sum of a Lie subalgebra containing $\gsm$ and three 32-dimensional 
subspaces, each of which forms the same representation of $\gsm$ as one generation of 
fermions and their antiparticles.   The setting is due to 
Nasmith, and much of the mathematics is that underlying the $\E_7$ generation 
unification of Kugo and Yanagida.  New features include the derivation, in which 
as many results as possible rely only on the embedding $\gsm \subset \e_7$, and 
also the description of each 32-dimensional subspace as a copy of the exterior algebra 
$\Lambda\C^5$. 
\end{abstract}

\section{Introduction}

The complex exceptional Lie algebra $\e_7$ is large enough to contain the complexified Standard
Model Lie algebra
\[
  \gsm \;=\; \C \oplus \sl_2 \oplus \sl_3
\]
with a great deal of room to spare, since
$\dim\e_7 = 133$.   The question is whether this room can be filled in an interesting
way.   Nasmith \cite{Nasmith2020,Nasmith2023} argued that it 
can, showing that $\gsm$ acts, via the Lie bracket of $\e_7$, on three 32-dimensional 
subspaces of $\e_7$ in a manner that precisely matches its representation on three 
generations of Standard Model fermions and their antiparticles--- including right-handed 
neutrinos and their antiparticles.

What follows is a somewhat new account of this story.   Starting from a suitable 
embedding of $\gsm$ in $\e_7$, we construct a Lie subalgebra 
\[   \gsm \subset \sl_6 \oplus \C^2 \subset \e_7 \]
and a vector space isomorphism
\[            \e_7 \; \cong \; (\sl_6 \oplus \C^2) \oplus V \]
where the space $V$ has dimension $3 \times 32$.  The subalgebra 
$\gsm$ acts on $V$, via the $\e_7$ Lie bracket, 
precisely as it does on three generations of Standard Model fermions and their antiparticles.   

Like Nasmith, we are not proposing a theory of physics.
We are only observing a fascinating mathematical pattern that might (or might not)
be of some use in physics.    

We tell our story in the language of complex
Lie algebras rather than the language of compact Lie groups because it is more
convenient for our calculations.   However, everything we say can be translated into a story about the compact real form of $\E_7$ and the Standard Model gauge group
\begin{equation}
\label{eq:GSM}
\begin{array}{ccl} 
\GSM &= &
\Big\{ x \in \SU(5) : x = 
\left( 
\begin{array}{c c c c c}
\ast & \ast & 0 & 0 & 0 \\
\ast & \ast & 0 & 0 & 0 \\
0 & 0 & \ast & \ast & \ast \\
0 & 0 & \ast & \ast & \ast \\
0 & 0 & \ast & \ast & \ast 
\end{array}
\right) \; \Big\}  \\ \\
&\cong &
\left(\U(1) \times \SU(2) \times \SU(3)\right) / \Z_6 .
\end{array}
\end{equation}

Note that this equation presents $\GSM$ as a subgroup of $\SU(5)$.   It is well known \cite{BaezHuerta2010} that if we restrict the natural representation of $\SU(5)$ on
the exterior algebra $\Lambda \C^5$ to this subgroup, we get exactly the usual 32-dimensional representation 
of $\GSM$ on one generation of Standard Model fermions and their antiparticles.   This
in turn gives a representation of $\gsm \subset \sl_5$ on $\Lambda \C^5$.  We call this the
\define{Standard Model representation} of $\gsm$.

While elegant, this construction does not address the \emph{three} generations of the Standard 
Model.    This is where $\e_7$ comes in.   It is worth sketching how this works before giving full details in the body of the paper.

We first fix a suitable inclusion of $\gsm$ in $\e_7$.   There is then a unique $\sl_3$ subalgebra of $\e_7$ that commutes with everything in $\gsm$.   This will act as symmetries relating the three generations, so we call it $\sl_3^\gen$.  

We next choose a Cartan subalgebra of $\e_7$ compatible with that of $\gsm$ and $\sl_3^\gen$.  This gives a way to split the eight dimensions of $\sl_3^\gen$ into two Cartan directions and six root spaces.    In the Cartan of $\sl_3^\gen$, the six roots lie on three lines:

\vskip 1em
\begin{center}
\begin{tikzpicture}[scale=1.5,>=latex]

  \def\R{1}

  \foreach \a in {0,60,120,180,240,300}{
    \draw[->,line width=1.1pt,black] (0,0) -- (\a:\R);
    \fill (\a:\R) circle (1.4pt);
  }

  \fill (0,0) circle (1.4pt);
  \draw[black,line width=0.7pt] (0,0) circle (3.2pt);


\end{tikzpicture}
\end{center}
\vskip 1em
\noindent
Each line corresponds to a copy of $\sl_2$ in $\sl_3^\gen$.
The centralizer in $\e_7$ of each such $\sl_2$ is isomorphic to $\so_{12}$.
We get three different $\sl_2 \oplus \so_{12}$ subalgebras of $\e_7$ this way.   
Their intersection is a copy of $\sl_6 \oplus \C^2$ that contains $\gsm$.  

Now to the main point: how to get three generations.
Each of the three $\sl_2 \oplus \so_{12}$ subalgebras is the direct sum of their common 
intersection $\sl_6 \oplus \C^2$ and a 32-dimensional subspace.    The 
subalgebra $\gsm \subset \e_7$ acts on each of these 32-dimensional subspaces 
via the $\e_7$ Lie bracket.  In each case this action is equivalent to the Standard Model representation of $\gsm$ on $\Lambda \C^5$.  If we call these subspaces $V_1, V_2$ and $V_3$, we have
\[
\begin{array}{ccl}
  \e_7 &\cong & (\sl_6 \oplus \C^2) \,\oplus\, V_1 \,\oplus\, V_2 \,\oplus\, V_3.
\end{array}
\]
The Lie algebra $\gsm$ is contained in $\sl_6 \oplus \C^2$.    We thus get an inclusion 
\[       \gsm \oplus\, V_1 \,\oplus\, V_2 \,\oplus\, V_3 \; \subset \; \e_7 \]
where the $\e_7$ Lie bracket gives three copies of the Standard Model representation of $\gsm$.

All this beauty comes at a price.   The three generations only account for $3 \cdot 32 = 96$ dimensions of $\e_7$.   The rest are accounted for by $\sl_6 \oplus \C^2$, which has dimension $35+ 2 = 37 = 133 - 96$.  We can split $\sl_6 \oplus \C^2$ into a $7$-dimensional Cartan subalgebra and $30$ root spaces.  This Lie algebra contains $\gsm$, which has a $4$-dimensional Cartan and  $8$ root spaces.   The remainder consists of $3$ Cartan dimensions and Nasmith's ``$22$ additional root spaces'' \cite[\S6]{Nasmith2020}.  Together they span a $25$-dimensional representation of $\gsm$ and thus 
$\SU(3) \times \SU(2)$, as shown in Table \ref{tab:22rootspaces}.

There is a unique copy of $\sl_5$ in $\e_7$ between $\gsm$ and the aforementioned
$\sl_6$.   Indeed there is a chain
\[           \gsm \subset \sl_5 \subset \sl_6 \subset \sl_6 \oplus \C^2 \subset \e_7  \]
with compatible root systems for all these Lie algebras.
The first row in Table \ref{tab:22rootspaces} corresponds to the 3 Cartan dimensions
in $\e_7$ not contained in $\gsm$.  The second corresponds to the 12 root spaces of $\e_7$
 lying in $\sl_5$ but not in $\gsm$.   In an $\SU(5)$ grand unified theory, these correspond
to the extra gauge bosons not present in the Standard Model, the so-called $X$ and $Y$
bosons, which serve to convert quarks into leptons and vice versa.
The last two rows together form the
$\mathbf{5} \oplus \bar{\mathbf{5}}$ of $\sl_5$,  corresponding to the $10$ root spaces lying
in $\sl_6$ but not in $\sl_5$.  In an $\SU(5)$ grand unified theory, the $\mathbf{5}$ is a multiplet
that contains the Standard Model Higgs boson, and this multiplet has antiparticles forming the
$\bar{\mathbf{5}}$.     Of course $\bar{\mathbf{2}} \cong \mathbf{2}$ as representations of 
$\SU(2)$; we write the bar simply to highlight some symmetries in the table.

\begin{table}[t]
\begin{tabular}{@{}llr@{}}
$\SU(3) \times \SU(2)$ representation & $\SU(5)$ interpretation & $\dim$ \\
\midrule
$(\mathbf{1},\mathbf{1}) \oplus (\mathbf{1},\mathbf{1}) \oplus  (\mathbf{1},\mathbf{1})$ & $\SU(5)$ singlets & $3$ \\
$(\mathbf{3},\mathbf{2}) \oplus (\bar{\mathbf{3}}, \bar{\mathbf{2}})$
  & $X$, $Y$ leptoquark gauge bosons of $\SU(5)$ & $12$ \\
$(\mathbf{3},\mathbf{1}) \oplus (\mathbf{1},\mathbf{2})$
  & Higgs $\mathbf{5}$ & $5$ \\
$(\bar{\mathbf{3}},\mathbf{1}) \oplus (\mathbf{1},\bar{\mathbf{2}})$
  & Higgs $\bar{\mathbf{5}}$  & $5$ \\
\midrule
& & $25$ \\
\end{tabular}
\vskip 1em
  \caption{Summands of $\e_7$ not in $\gsm$ or the three generations.}  
  \label{tab:22rootspaces}
\end{table}

\subsection*{Relation to earlier work.}

Our work arose as an attempt to understand the work of Nasmith 
\cite{Nasmith2020,Nasmith2023}.   But it also has other antecedents.   Kugo and Yanagida
\cite{KugoYanagida1984} studied generation unification using supergravity on 
the manifold $\E_7/(\SU(5) \times \SU(3) \times \mathrm{U}(1))$ and
identified three families of quarks and leptons with pseudo Nambu--Goldstone
fermions, arguing that the triplication of generations points to an underlying
$\E_7$ symmetry.
Sato and Yanagida \cite{SatoYanagida1998} refined this to an
$\E_7/(\SU(5) \times \mathrm{U}(1)^3)$ model, noting in particular that 
the model supplies three right-handed neutrinos.  

Our work also has strong connections to Lisi's study of $\e_7$ using Clifford
algebras, the nonassociative algebra $\mathbb{H} \otimes \mathbb{O}$, 
and triality \cite{Lisi2026}.    For example, our three decompositions of $\e_7$ as a 
direct sum of a subalgebra $\sl_2 \oplus \so_{12}$ and a 
64-dimensional representation of this subalgebra echo his work, which uses
one such decomposition of a real form of $\e_7$.   
However, he attempts to include spin degrees of freedom, which doubles
the size of the Standard Model representation from $\Lambda \C^5$ to
\[  (\Lambda^\text{even} \C^5 \otimes \textbf{2}) \;\oplus\; (\Lambda^{\text{odd}} \C^5 \otimes 
\overline{\textbf{2}}) , \] 
where $\textbf{2}$ is the defining representation of $\sl_2$
and $\overline{\textbf{2}}$ is its conjugate.  For this reason he cannot fit 
three linearly independent generations of fermions and their antiparticles into $\e_7$.
Like Nasmith, we simply \emph{ignore} how the particles transform under Lorentz 
transformations: tackling this question would require a larger algebraic structure and
new ideas.   Distler and Garibaldi \cite{DistlerGaribaldi2010} have put strong limitations
on attempts to include the complexified Lorentz Lie algebra $\sl_2$ and chiral spin-1/2 fermions 
in any form of $\e_8$.   On the other hand, Krasnov and Percacci \cite{KrasnovPercacci2018}
point out that $(\Lambda^\text{even} \C^5 \otimes \textbf{2}) \;\oplus\; (\Lambda^{\text{odd}} \C^5 \otimes 
\overline{\textbf{2}})$ arises naturally from restricting the chiral spinor representation of
$\so_{14}$ to $\so_{10} \oplus \so_4$. 

Following Nasmith, our work relies heavily on the chain of Lie algebras shown in Table \ref{tab:e_n_series}, which arises from $\e_7$ by a systematic process of `removing
roots':
\[    \sl_3 \oplus \sl_2 \subset \sl_5 \subset \so_{10} \subset \e_6 \subset \e_7. \]
This chain also plays a key role in our work with Bokor and Boyle \cite{BaezBokorBoyle2026}.
There we exploited the fact that each step down the chain gives a hermitian Jordan triple system.
(This fact does not hold for $\e_7 \subset \e_8$.)  We showed that the chain going down from 
$\e_6$ naturally accounts for the representation of $\gsm$ on \emph{one} generation of 
Standard Model fermions and their antiparticles.   
We raised the possibility that the chain going down from $\e_7$ could account for 
\emph{three}, but did not settle the question.  The work here may shed some light on that.

\subsection*{Plan.}
Section~\ref{sec:setup} defines a `good' embedding of $\gsm$ in 
$\e_7$ and shows how to construct one.   From then on
we choose a particular good $\gsm \subset \e_7$ and work with that.
Section~\ref{sec:gen} shows that the centralizer of this $\gsm$ in $\e_7$ 
contains only one subalgebra isomorphic to $\sl_3$, which we call $\sl_3^\gen$.   
Section~\ref{sec:three} studies three copies of $\sl_2$ in $\sl_3^\gen$, which we
call $\sl_2(\beta_k)$ for $k = 1,2,3$.  Section~\ref{sec:tri} gives an analysis of the 
roots of $\e_7$ crucial for all the proofs that follow.  Section~\ref{sec:so12} shows that the
centralizer of each subalgebra $\sl_2(\beta_k)$ is a subalgebra isomorphic to $\so_{12}$, 
which we call $\so_{12}(\beta_k)$.   It also studies three subalgebras 
$\m_k = \sl_2(\beta_k) \oplus \so_{12}(\beta_k) \subset \e_7$.
 Section~\ref{sec:sl6} shows that the centralizer of $\sl_3^\gen$
in $\e_7$ is a copy of $\sl_6$ lying in the intersection of all three of these subalgebras 
$\m_k$.   We call this centralizer $\sl_6^\sm$.   Section~\ref{sec:sl5} show that there is
a unique copy of $\sl_5$ containing $\gsm$ and contained in $\sl_6^\sm$.   We call this 
$\sl_5^\sm$.   

Based on all this work, Section~\ref{sec:generations} constructs three 30-dimensional
subspaces of $\e_7$ that transform under $\gsm$ as one generation of fermions and antifermions \emph{not including} right-handed neutrinos and their antiparticles.  Section~\ref{sec:lambda} 
shows how to construct larger 32-dimensional subspaces $V_1, V_2, V_3$ that transform under 
$\gsm$ as one generation of fermions including right-handed neutrinos and their antiparticles.  

\section{The Standard Model Lie algebra in $\e_7$}\label{sec:setup}

All our constructions rely on choosing a Lie subalgebra of $\e_7$ that is 
isomorphic to $\gsm$.    We need to choose one with good properties.   If one 
such subalgebra counts as `good' and it is mapped to another by any automorphism of $\e_7$,
the other also counts as `good'.   However, there are probably many
subalgebras of $\e_7$ that are isomorphic to $\gsm$ but 
not mapped to each other by automorphisms of $\e_7$.     For example, it 
is known that there is a countable infinity of 
subalgebras of $\so_{10}$ isomorphic to $\gsm$, not related to each other by
automorphisms of $\so_{10}$ \cite{Baez2025}.   Thus, we should be careful about our choice.

Here we describe two methods of choosing a good $\gsm$ subalgebra of
$\e_7$: one that is quick to describe, and another that
is more useful.   Both require the concept of a `regular' 
Lie subalgebra \cite[Chap.\ II]{Dynkin1957}.   
A subalgebra $\g'$ of a semisimple Lie algebra $\g$ with chosen Cartan $\h$ is 
\define{regular} if $[\h,\g'] \subseteq \g'$.  Regular subalgebras are a powerful tool
because they are easy to work with using root systems. 

For the quick method, start with the subgroup $\GSM \subset \SU(5)$ given 
in Equation \eqref{eq:GSM}.  Taking Lie
algebras and complexifying, we obtain the Lie subalgebra $\gsm \subset \sl_5$, and this is 
regular with respect to the standard Cartan of $\sl_5$.    Then choose any 
regular subalgebra of $\e_7$ isomorphic to $\sl_5$.   All such regular subalgebras are 
related by automorphisms of $\e_7$: this follows from the fact that all $\A_4$ root 
subsystems of $\E_7$ lie in a single orbit of the $\E_7$ Weyl group 
\cite[Table 10.2]{Oshima2006}.    Composing these two embeddings
\[            \gsm \subset \sl_5 \subset \e_7, \]
$\gsm$ becomes a regular subalgebra of $\e_7$.  Any subalgebra of $\e_7$ obtained
by this procedure counts as `good' for our purposes.

In the second method, we start with $\e_7$ and construct a sequence of smaller and
smaller regular subalgebras by successively removing dots from its Dynkin diagram until 
we are led to $\gsm$.    This is an instance of a general procedure which we call `removing
a root'.   

Suppose $\g$ is a semisimple
Lie algebra with Cartan $\h$.   Let $\Phi \subset \h^\ast$ be its root system.   For each $r \in \Phi$,
define the root space $\g_r$ by
\[     \g_r = \{ x \in \g \; : \; [a,x] = r(a) x  \text{ for all } a \in \h \} .\]
We then have
\[            \g = \h \; \oplus \; \bigoplus_{r \in \Phi} \g_r  .\] 
Choose a set of simple roots 
$\alpha_1, \dots, \alpha_n \in \Phi$.  We now trim down $\g$ to a smaller semisimple 
Lie algebra by removing the last of these simple roots, using the following procedure.

Let $\Phi' \subset \Phi$ be the intersection of $\Phi$ and the set of integer linear 
combinations of $\alpha_1, \dots, \alpha_{n-1}$.  Let $\h' \subset \h$ be
the subspace on which the root $\alpha_n \maps \h \to \C$ vanishes.  Then 
\[             \g' = \h' \; \oplus \; \bigoplus_{r \in \Phi'} \g_r \]
is a semisimple Lie algebra in its own right.  It is a regular subalgebra of $\g$, with rank 
one less than that of $\g$.   We take its Cartan to be $\h'$, and its root system is then $\Phi'$.   
This allows us to repeat the procedure as long as $\Phi'$ is nonempty.

It is also important to note that $\g'$ is contained in the larger Lie subalgebra
\[       \mathfrak{l} = \h \; \oplus \; \bigoplus_{r \in \Phi'} \g_r .\]
This is also a regular subalgebra of $\g$, but it is not semisimple, merely \define{reductive}:
that is isomorphic to the direct sum of a semisimple Lie algebra and an abelian one.  It is
isomorphic to $\g' \oplus \C$, since $\h$ is one dimension bigger than $\h'$.   This 
subalgebra $\mathfrak{l}$ is called a \define{maximal Levi subalgebra} of $\g$.   We 
shall see that $\gsm \cong \sl_3 \oplus \sl_2 \oplus \C$ arises as a maximal Levi subalgebra 
of $\sl_5$.

We start the procedure with $\e_7$.  Fix a Cartan subalgebra $\h \subset \e_7$ and
let $\Phi \subset \h^*$ be the resulting root system, so 
\[
  \e_7 \;=\; \h \,\oplus\, \bigoplus_{r \in \Phi} (\e_7)_r ,
\]
where each root space $(\e_7)_r$ has dimension one.  
Choose a set of simple roots $\alpha_1, \dots, \alpha_7 \in \Phi$ such that any pair lies at an angle of 120$^\circ$ when connected by an edge in this diagram, and 90$^\circ$ otherwise:

\begin{center}
\begin{tikzpicture}[
    node/.style={circle,draw,fill=white,inner sep=0pt,minimum size=7pt},
    every node/.style={},
    darknode/.style={circle,draw,fill=black,inner sep=0pt,minimum size=7pt},
    every node/.style={},
    x=1.3cm,y=1.3cm]

  \node[darknode] (a1) at (0,0) {};
  \node[darknode] (a3) at (1,0) {};
  \node[darknode] (a4) at (2,0) {};
  \node[darknode] (a5) at (3,0) {};
  \node[darknode] (a6) at (4,0) {};
  \node[darknode] (a7) at (5,0) {};
  \node[darknode] (a2) at (2,1) {};

  \draw (a1) -- (a3) -- (a4) -- (a5) -- (a6) -- (a7);
  \draw (a4) -- (a2);

  \node[below=3pt] at (a1) {$\alpha_1$};
  \node[below=3pt] at (a3) {$\alpha_3$};
  \node[below=3pt] at (a4) {$\alpha_4$};
  \node[below=3pt] at (a5) {$\alpha_5$};
  \node[below=3pt] at (a6) {$\alpha_6$};
  \node[below=3pt] at (a7) {$\alpha_7$};
  \node[right=3pt] at (a2) {$\alpha_2$};

\end{tikzpicture}
\end{center}

We then carry out the procedure just described, repeatedly, removing roots one at a time starting from $\alpha_7$ and working our way down to $\alpha_4$.  We obtain a series of regular semisimple Lie subalgebras of $\e_7$.  These are sometimes called the \define{$\e_n$ series}, or more precisely the portion of the $\e_n$ series going from $n = 7$ down to $n = 3$.   We show them in Table \ref{tab:e_n_series}, along with the dimension of each Lie algebra, its rank, and its number of roots.

At the last stage, the maximal Levi subalgebra is $\e_3 \oplus \C \cong \gsm$.   By construction this is a regular subalgebra of $\e_7$.   Any subalgebra of $\e_7$ that is carried to this one under some automorphism is defined to be \define{good}, and all the constructions of this paper apply to it.

In what follows we often work with a fixed Cartan $\h \subset \e_7$, which gives a Cartan for all the smaller Lie algebras in Table \ref{tab:e_n_series}, but we state as many results as we can in a Cartan-independent way, depending only on a chosen good $\gsm \subset \e_7$.   We 
identify $\h$ with its dual using the Killing form, normalized so that $\|r\|^2 = 2$ for every root: this is possible since the $\E_7$ root system is simply 
laced.    We let $V$ be the real span of $\Phi$, which is a 7-dimensional real inner product space contained in $\h$.   We have
\begin{equation}\label{eq:integrality}
  \ip{r}{s} \in \{-2,-1,0,1,2\} \qquad \text{for all } r,s \in \Phi 
\end{equation}
with the value $2$ only when $s = r$, and $-2$ only when $s = -r$.
   
\begin{table}[h]
  \label{tab:e_n_series}
  \centering
  \begin{tabular}{@{}llllp{0.38\textwidth}@{}}
    \toprule
      Dynkin diagram & semisimple Lie algebra  $\g$ & $\dim \g = \dim \h + |\Phi| $ \\
    \midrule
\raisebox{-0.5em}{\scalebox{0.5}{
\begin{tikzpicture}[
    node/.style={circle,draw,fill=white,inner sep=0pt,minimum size=7pt},
    every node/.style={},
    darknode/.style={circle,draw,fill=black,inner sep=0pt,minimum size=7pt},
    every node/.style={},
    x=1.3cm,y=1.3cm]

  \node[darknode] (a1) at (0,0) {};
  \node[darknode] (a3) at (1,0) {};
  \node[darknode] (a4) at (2,0) {};
  \node[darknode] (a5) at (3,0) {};
  \node[darknode] (a6) at (4,0) {};
  \node[darknode] (a7) at (5,0) {};
  \node[darknode] (a2) at (2,1) {};

  \draw (a1) -- (a3) -- (a4) -- (a5) -- (a6) -- (a7);
  \draw (a4) -- (a2);

  \node[below=3pt] at (a1) {$\alpha_1$};
  \node[below=3pt] at (a3) {$\alpha_3$};
  \node[below=3pt] at (a4) {$\alpha_4$};
  \node[below=3pt] at (a5) {$\alpha_5$};
  \node[below=3pt] at (a6) {$\alpha_6$};
  \node[below=3pt] at (a7) {$\alpha_7$};
  \node[right=3pt] at (a2) {$\alpha_2$};

\end{tikzpicture}
}}
& $\e_7$    
& 133 = 7 + 126
       \\  [6pt]
\raisebox{-0.5em}{\scalebox{0.5}{
\begin{tikzpicture}[
    node/.style={circle,draw,fill=white,inner sep=0pt,minimum size=7pt},
    every node/.style={},
    darknode/.style={circle,draw,fill=black,inner sep=0pt,minimum size=7pt},
    every node/.style={},
    x=1.3cm,y=1.3cm]

  \node[darknode] (a1) at (0,0) {};
  \node[darknode] (a3) at (1,0) {};
  \node[darknode] (a4) at (2,0) {};
  \node[darknode] (a5) at (3,0) {};
  \node[darknode] (a6) at (4,0) {};
  \node[node] (a7) at (5,0) {};
  \node[darknode] (a2) at (2,1) {};

  \draw (a1) -- (a3) -- (a4) -- (a5) -- (a6) -- (a7);
  \draw (a4) -- (a2);

  \node[below=3pt] at (a1) {$\alpha_1$};
  \node[below=3pt] at (a3) {$\alpha_3$};
  \node[below=3pt] at (a4) {$\alpha_4$};
  \node[below=3pt] at (a5) {$\alpha_5$};
  \node[below=3pt] at (a6) {$\alpha_6$};
  \node[below=3pt] at (a7) {$\alpha_7$};
  \node[right=3pt] at (a2) {$\alpha_2$};

\end{tikzpicture}
}} 
& $\e_6$ 
&  78 = 6 + 72
  \\ [6pt]
\raisebox{-0.5em}{\scalebox{0.5}{
\begin{tikzpicture}[
    node/.style={circle,draw,fill=white,inner sep=0pt,minimum size=7pt},
    every node/.style={},
    darknode/.style={circle,draw,fill=black,inner sep=0pt,minimum size=7pt},
    every node/.style={},
    x=1.3cm,y=1.3cm]

  \node[darknode] (a1) at (0,0) {};
  \node[darknode] (a3) at (1,0) {};
  \node[darknode] (a4) at (2,0) {};
  \node[darknode] (a5) at (3,0) {};
  \node[node] (a6) at (4,0) {};
  \node[node] (a7) at (5,0) {};
  \node[darknode] (a2) at (2,1) {};

  \draw (a1) -- (a3) -- (a4) -- (a5) -- (a6) -- (a7);
  \draw (a4) -- (a2);

  \node[below=3pt] at (a1) {$\alpha_1$};
  \node[below=3pt] at (a3) {$\alpha_3$};
  \node[below=3pt] at (a4) {$\alpha_4$};
  \node[below=3pt] at (a5) {$\alpha_5$};
  \node[below=3pt] at (a6) {$\alpha_6$};
  \node[below=3pt] at (a7) {$\alpha_7$};
  \node[right=3pt] at (a2) {$\alpha_2$};

\end{tikzpicture}
}}
&    $\e_5 = \so_{10}$ 
&   45 = 5 + 40 
\\ [6pt]
\raisebox{-0.5em}{\scalebox{0.5}{
\begin{tikzpicture}[
    node/.style={circle,draw,fill=white,inner sep=0pt,minimum size=7pt},
    every node/.style={},
    darknode/.style={circle,draw,fill=black,inner sep=0pt,minimum size=7pt},
    every node/.style={},
    x=1.3cm,y=1.3cm]

  \node[darknode] (a1) at (0,0) {};
  \node[darknode] (a3) at (1,0) {};
  \node[darknode] (a4) at (2,0) {};
  \node[node] (a5) at (3,0) {};
  \node[node] (a6) at (4,0) {};
  \node[node] (a7) at (5,0) {};
  \node[darknode] (a2) at (2,1) {};

  \draw (a1) -- (a3) -- (a4) -- (a5) -- (a6) -- (a7);
  \draw (a4) -- (a2);

  \node[below=3pt] at (a1) {$\alpha_1$};
  \node[below=3pt] at (a3) {$\alpha_3$};
  \node[below=3pt] at (a4) {$\alpha_4$};
  \node[below=3pt] at (a5) {$\alpha_5$};
  \node[below=3pt] at (a6) {$\alpha_6$};
  \node[below=3pt] at (a7) {$\alpha_7$};
  \node[right=3pt] at (a2) {$\alpha_2$};

\end{tikzpicture}
}}
&  $\e_4 = \sl_5$ 
&  24 = 4 + 20
     \\ [6pt] 
\raisebox{-0.5em}{\scalebox{0.5}{
\begin{tikzpicture}[
    node/.style={circle,draw,fill=white,inner sep=0pt,minimum size=7pt},
    every node/.style={},
    darknode/.style={circle,draw,fill=black,inner sep=0pt,minimum size=7pt},
    every node/.style={},
    x=1.3cm,y=1.3cm]

  \node[darknode] (a1) at (0,0) {};
  \node[darknode] (a3) at (1,0) {};
  \node[node] (a4) at (2,0) {};
  \node[node] (a5) at (3,0) {};
  \node[node] (a6) at (4,0) {};
  \node[node] (a7) at (5,0) {};
  \node[darknode] (a2) at (2,1) {};

  \draw (a1) -- (a3) -- (a4) -- (a5) -- (a6) -- (a7);
  \draw (a4) -- (a2);

  \node[below=3pt] at (a1) {$\alpha_1$};
  \node[below=3pt] at (a3) {$\alpha_3$};
  \node[below=3pt] at (a4) {$\alpha_4$};
  \node[below=3pt] at (a5) {$\alpha_5$};
  \node[below=3pt] at (a6) {$\alpha_6$};
  \node[below=3pt] at (a7) {$\alpha_7$};
  \node[right=3pt] at (a2) {$\alpha_2$};

\end{tikzpicture}
}}
&  $\e_3 = \sl_3 \oplus \sl_2$ 
& 11 = 3 + 8
     \\
  \end{tabular}
\vskip 1em
  \caption{The $\e_n$ series from $n = 7$ down to $n = 3$.}
\end{table}

\section{The generation \texorpdfstring{$\sl_3$}{sl(3)}}\label{sec:gen}

\begin{proposition} \label{prop:sl_3}
The centralizer of $\gsm$ in $\e_7$ is isomorphic to $\sl_3 \oplus \C^2$,
and it contains exactly one subalgebra isomorphic to $\sl_3$.
\end{proposition}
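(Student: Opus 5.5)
The plan is to compute the centralizer directly from root data, using the fact that a good $\gsm$ is regular. By construction $\gsm = \h_4 \oplus \bigoplus_{r\in\Phi_3}(\e_7)_r$. Here $\h_4 = \mathrm{span}(\alpha_1,\alpha_2,\alpha_3,\alpha_4)$ is the Cartan of $\e_4 = \sl_5$, identified with a subspace of $\h$ via the Killing form. Also $\Phi_3$ is the $\A_2\times\A_1$ root system of $\e_3$, generated by $\alpha_1,\alpha_3,\alpha_2$. The centralizer $\mathfrak{c}$ of $\gsm$ is $\h$-stable, because $\h$ normalizes $\gsm$ and each $\ad$ of an element of $\gsm$ has a weight decomposition compatible with $\h$. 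So $\mathfrak{c}$ splits as $(\mathfrak{c}\cap\h)\oplus\bigoplus_{r}(\mathfrak{c}\cap(\e_7)_r)$, and I will determine each piece separately.

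\emph{Cartan part.} An element $a\in\h$ commutes with $\h_4$ automatically. It commutes with $(\e_7)_r$ for $r\in\Phi_3$ exactly when $\ip{r}{a}=0$. So $\mathfrak{c}\cap\h = \{\alpha_1,\alpha_2,\alpha_3\}^\perp$, which is $4$-dimensional. This space contains the central element $Y\in\h_4$ of $\gsm$, namely the element of $\h_4$ orthogonal to $\alpha_1,\alpha_2,\alpha_3$. This is the point where one must be careful: the hypercharge direction itself lies in the centralizer, and it accounts for one of the two $\C$ summands.

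\emph{Root part.} A root space $(\e_7)_r$ commutes with $\h_4$ iff $r\perp\alpha_1,\dots,\alpha_4$. It commutes with $(\e_7)_{\pm s}$ for $s\in\Phi_3$ iff $r\pm s\notin\Phi\cup\{0\}$. For simply laced systems with $r\perp s$ this holds automatically by \eqref{eq:integrality}. So the root part of $\mathfrak{c}$ is spanned by the $(\e_7)_r$ with $r\in\Psi := \Phi\cap\{\alpha_1,\alpha_2,\alpha_3,\alpha_4\}^\perp$.

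The main step, and the one needing actual computation, is to show that $\Psi$ is an $\A_2$ root system: six roots spanning a $2$-dimensional subspace of the $3$-dimensional space $\{\alpha_1,\dots,\alpha_4\}^\perp$. I would do this either by citing the known centralizer $\A_2$ of an $\A_4$ subsystem of $\E_7$ (the $\SU(5)\times\SU(3)\times\U(1)\subset \SU(8)/\Z_2\subset\E_7$ picture), or by an explicit check. For the explicit check, I would write roots of $\E_7$ in the standard $\SU(8)$ coordinates $e_i-e_j$ and $\pm(e_i+e_j+e_k+e_l)-\tfrac12\sum e$, placing $\A_4$ on $e_1-e_2,\dots,e_4-e_5$. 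Then exactly $\pm(e_6-e_7),\pm(e_7-e_8),\pm(e_6-e_8)$ survive, since the four-index roots cannot be orthogonal to all of $e_1-e_2,\dots,e_4-e_5$ unless they contain all or none of $e_1,\dots,e_5$, which is impossible with four indices. Since all $\A_4$ subsystems are Weyl-conjugate (as recalled in Section~\ref{sec:setup}), this coordinate choice is harmless.

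Granting this, $\mathfrak{c} = \mathfrak{c}\cap\h \oplus \bigoplus_{r\in\Psi}(\e_7)_r$. The span of $\Psi$ together with the corresponding root spaces is a copy of $\sl_3$, which I call $\s$. The $4$-dimensional $\mathfrak{c}\cap\h$ is the $2$-dimensional Cartan of $\s$ plus the $2$-dimensional space $\Psi^\perp\cap\{\alpha_1,\alpha_2,\alpha_3\}^\perp$. Each element of that last space is orthogonal to every root in $\Psi$, hence commutes with all of $\s$ and with all of $\mathfrak{c}$. Therefore $\mathfrak{c}\cong\sl_3\oplus\C^2$, of dimension $4+6=10$.

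For uniqueness, the derived algebra $[\mathfrak{c},\mathfrak{c}]$ equals $\s$. Any subalgebra $\s'\subseteq\mathfrak{c}$ isomorphic to $\sl_3$ is perfect, so $\s'=[\s',\s']\subseteq[\mathfrak{c},\mathfrak{c}]=\s$. Comparing dimensions gives $\s'=\s$.
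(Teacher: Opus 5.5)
Your proof is correct and follows essentially the paper's route. Regularity splits the centralizer into the Cartan part $\{\alpha_1,\alpha_2,\alpha_3\}^\perp$ plus root spaces, the surviving roots form an $\A_2$, a dimension count gives $\sl_3\oplus\C^2$, and uniqueness holds because $\sl_3$ is perfect; the paper phrases this as the projection to $\C^2$ being an abelian quotient. Your root condition is worth keeping: requiring orthogonality to all of $\alpha_1,\dots,\alpha_4$, including the hypercharge direction, is the correct one, whereas the paper's formula \eqref{eq:centralizer} read literally would also admit $\alpha_5$ and yield the $\A_3$ generated by $\alpha_5,\alpha_6,\alpha_7$ before simply asserting $\A_2$. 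Your $\SU(8)$-coordinate count supplies the check the paper omits, apart from a harmless slip: the half-integral roots are $\pm\bigl(e_i+e_j+e_k+e_l-\tfrac12\sum e\bigr)$.
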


\begin{proof}
This description of the centralizer is in \cite[\S6]{Nasmith2020}.  We prove it
here using the general theory of regular subalgebras \cite{Dynkin1957}, since
the ideas are also used in later centralizer computations.

Suppose $\g$ is a simply laced semisimple Lie algebra with Cartan subalgebra
$\h$ and root system $\Phi$.  Let $\g' \subseteq \g$ be a regular reductive
subalgebra, so 
\[  \g' = \h' \oplus \bigoplus_{r \in \Phi'} \g_r \] 
for some
subspace $\h' \subseteq \h$ and some subsystem $\Phi' \subseteq \Phi$ with
$\operatorname{span}(\Phi') \subseteq \h'$.  Note that $\h'$ may be strictly
larger than $\operatorname{span}(\Phi')$, so that $\g'$ may have a nontrivial 
center, as is the case for $\gsm$.

An element of $\g$ centralizes $\g'$ if and only if it commutes with $\h'$ and
with every root space $\g_r$ for $r \in \Phi'$.   One can 
show the centralizer $C_{\g}(\g')$ is spanned by
\begin{itemize}
\item the subspace of $\h$ orthogonal to every element of $\Phi'$, and
\item the root spaces $\g_r$ for those roots $r \in \Phi$ orthogonal to every element of $\Phi'$.
\end{itemize}
(If $\g$ were not simply laced, we would need to replace the second item by a stronger condition.)   We can summarize this result as
\begin{equation}
\label{eq:centralizer}
C_{\g}(\g') = \Phi'^\perp
   \;\oplus\; \bigoplus_{r \in \Phi \cap \Phi'^\perp} \g_r ,
\end{equation}
where $\Phi'^\perp$ denotes the subspace of $\h$ orthogonal to 
every element of $\Phi'$.

In the case at hand, $\g' = \gsm$ is a regular subalgebra of $\g = \e_7$, with 
root system $\Phi' \subset \Phi$, and
\[   C_{\e_7}(\gsm) =\Phi'^\perp \oplus \bigoplus_{r \in \Phi  \cap \Phi'^\perp} (\e_7)_r  .\]
Here $\Phi \cap \Phi'^\perp$ is a root system of type $\A_2$.   Thus, the second summand 
above and the 2-dimensional subspace of the first spanned by these $\A_2$ roots 
together form a copy of $\sl_3$.   Since $\h$ has dimension 7 while the span of 
$\Phi'$ has dimension 3, $\Phi'^\perp$ has dimension 4; removing the two 
dimensions spanned by the $\A_2$ roots leaves a 2-dimensional abelian Lie algebra 
that commutes with this $\sl_3$.  Thus $C_{\e_7}(\gsm) \cong \sl_3 \oplus \C^2$ as desired.  

To show that the $\sl_3$ subalgebra is unique, let
$\mathfrak{a} \subseteq \sl_3 \oplus \C^2$ be any subalgebra isomorphic to
$\sl_3$, and let $p$ denote projection onto the abelian summand $\C^2$. Then
$p(\mathfrak{a})$ is an abelian quotient of $\mathfrak{a}$; since $\sl_3$ is
simple, $p(\mathfrak{a}) = 0$. Hence $\mathfrak{a} \subseteq \sl_3$, and
equality follows by comparing dimensions.
\end{proof}

We call the unique copy of $\sl_3$ in the centralizer of $\gsm$ the \define{generation $\sl_3$},
and denote it as $\sl_3^\gen$.   The proof of the proposition leads to this picture:
\vskip 1em
\begin{center}
\begin{tikzpicture}[
    node/.style={circle,draw,ultra thick,fill=white,inner sep=0pt,minimum size=7pt},
    every node/.style={},
    bluenode/.style={circle,draw,ultra thick,blue,fill=blue,inner sep=0pt,minimum size=7pt},
    every node/.style={},
    lightpurplenode/.style={circle,draw,ultra thick,purple,inner sep=0pt,minimum size=7pt},
    every node/.style={},
    lightrednode/.style={circle,draw,ultra thick,red,inner sep=0pt,minimum size=7pt},
    every node/.style={},
    rednode/.style={circle,draw,ultra thick,red,fill=red,inner sep=0pt,minimum size=7pt},
    every node/.style={},
    x=1.3cm,y=1.3cm]

  \node[bluenode] (a1) at (0,0) {};
  \node[bluenode] (a3) at (1,0) {};
  \node[lightpurplenode] (a4) at (2,0) {};
  \node[lightrednode] (a5) at (3,0) {};
  \node[rednode] (a6) at (4,0) {};
  \node[rednode] (a7) at (5,0) {};
  \node[bluenode] (a2) at (2,1) {};

  \draw (a1) -- (a3) -- (a4) -- (a5) -- (a6) -- (a7);
  \draw (a4) -- (a2);

  \node[] at (1,0.8) {\color{blue} $\gsm$};
  \node[] at (4.5,0.4) {\color{brickred} $\sl_3^\gen$};
 
  \node[below=3pt] at (a1) {$\alpha_1$};
  \node[below=3pt] at (a3) {$\alpha_3$};
  \node[below=3pt] at (a4) {$\alpha_4$};
  \node[below=3pt] at (a5) {$\alpha_5$};
  \node[below=3pt] at (a6) {$\alpha_6$};
  \node[below=3pt] at (a7) {$\alpha_7$};
  \node[right=3pt] at (a2) {$\alpha_2$};

\end{tikzpicture}
\end{center}
\vskip 0.5em
Each root $\alpha_i$ corresponds to a copy of $\sl_2$ in $\e_7$ with basis elements
$e_i, f_i, h_i$ obeying the usual relations
\[         [e_i, f_i] = h_i, \quad [h_i, e_i] = 2e_i, \quad [h_i, f_i] = -2f_i  .\]
The blue dots correspond to the semisimple part of $\gsm$: 
that is, $e_i, f_i$ and $h_i$ for $i = 1,2,3$ generate the subalgebra 
$\sl_3 \oplus \sl_2 \subset \gsm$.
The red dots correspond to the semisimple part of
the centralizer of $\gsm$, which is the generation $\sl_3$: $e_i, f_i$, and $h_i$ for $i = 6,7$ 
generate $\sl_3^\gen$.
The hollow purple and red dots correspond to the center of $C_{\e_7}(\gsm)$: $h_4$ and $h_5$ span this center, which is  a copy of $\C^2$ .  The hollow purple dot also corresponds to the
center of $\gsm$, which is spanned by $h_4$.   The overlap is no coincidence: by definition,
the center of $\gsm$ is the intersection of $\gsm$ and its centralizer.  
In physics, the center of $\gsm$ gives rise to hypercharge symmetry.

\section{Three copies of \texorpdfstring{$\sl_2$}{sl(2)} in the generation \texorpdfstring{$\sl_3$}{sl(3)}}\label{sec:three}

Being a regular subalgebra of $\e_7$, $\sl_3^{\gen}$ has a root system
$A \subset \Phi$ of type $\A_2$, spanning a plane $P$ in the real vector space $V$
spanned by the roots of $\e_7$.   We call $P$ the \define{generation plane}.
 It is convenient to name the weights of the defining representation of
$\sl_3^\gen$: these are $w_1, w_2, w_3 \in P$ with
\begin{equation}\label{eq:weights}
  w_1 + w_2 + w_3 = 0, \qquad |w_i|^2 = \tfrac{2}{3}, \qquad
  \ip{w_i}{w_j} = -\tfrac{1}{3} \;\; (i \neq j),
\end{equation}
and then
\[
  A \;=\; \{\, w_i - w_j \;:\; i \neq j \,\}, \qquad |A| = 6 .
\]
The six elements of $A$ lie on three lines through the origin. For $k \in \{1,2,3\}$ set
\begin{equation}\label{eq:betak}
  \beta_k \;:=\; w_i - w_j, \qquad i, j, k \text{ a cyclic permutation of } 1,2,3,
\end{equation}
so that $A = \{\pm\beta_1, \pm\beta_2, \pm\beta_3\}$. From
\eqref{eq:weights},
\begin{equation}\label{eq:orthogonality}
  \ip{w_k}{\beta_k} = 0, \qquad
  \ip{w_i}{\beta_k} = \pm 1 \;\; (i \neq k).
\end{equation}
Thus each line of $A$ singles out one of the three weights $w_k$: the one
orthogonal to it. 

\vskip 1em
\begin{center}
\begin{tikzpicture}[scale=1.5,>=latex]

  \def\R{1}

  \foreach \a in {0,60,120,180,240,300}{
    \draw[->,line width=1.1pt,black] (0,0) -- (\a:\R);
    \fill (\a:\R) circle (1.4pt);
  }

  \fill (0,0) circle (1.4pt);
  \draw[black,line width=0.7pt] (0,0) circle (3.2pt);

  \def\rw{0.5774}   
  \fill[black] (30:\rw*\R)  circle (1.4pt);
  \fill[black] (150:\rw*\R) circle (1.4pt);
  \fill[black] (270:\rw*\R) circle (1.4pt);
  \node[anchor=south west,inner sep=2pt] at (30:\rw*\R)  {$w_1$};
  \node[anchor=south east,inner sep=2pt] at (150:\rw*\R) {$w_2$};
  \node[anchor=north,inner sep=2pt]      at (270:\rw*\R) {$w_3$};

  \node[anchor=west]       at (0:\R)   {$\beta_3$};
  \node[anchor=south east] at (120:\R) {$\beta_1$};
  \node[anchor=north east] at (240:\R) {$\beta_2$};
  \node[anchor=east]       at (180:\R) {$-\beta_3$};
  \node[anchor=north west] at (300:\R) {$-\beta_1$};
  \node[anchor=south west] at (60:\R)  {$-\beta_2$};

\end{tikzpicture}
\end{center}
\vskip 1em

Each line gives a copy of $\sl_2$ in the generation $\sl_3$:
\[
  \sl_2(\beta_k) \;=\; \C e_{\beta_k} \oplus \C h_{\beta_k}
                          \oplus \C f_{\beta_k}
                          \;\subset\; \sl_3^\gen ,
\]
We call these the three \define{generation $\sl_2$'s}.    As we shall see, the generation $\sl_3$ generates transformations that mix up fermions of all three generations.  The $k$th generation $\sl_2$ generates transformations that leave fermions in the $k$th generation fixed, while mixing up fermions in the other two generations.

\section{How the roots of \texorpdfstring{$\e_7$}{e7} see the plane
\texorpdfstring{$P$}{P}}\label{sec:tri}

Everything to come rests on the following trichotomy, which says that
a root of $\e_7$ can project down to the generation plane in only three qualitatively
different ways.    This is a reformulation of Nasmith's Lemma 2.1 \cite{Nasmith2020},
which in turn goes back to ideas in \cite{CGSS1976}.

Recall that $\Phi$ is the set of roots of $\e_7$, while $A \subset \Phi$ is the set of roots of $\gsm$.
Similarly $V$ is the real span of $\Phi$, while
$P \subset V$, the generation plane, is the real span of $A$.
Let $\pi \maps V \to P$ denote orthogonal projection.

\begin{lemma}\label{lem:trichotomy}
For every $r \in \Phi$,
\[
  \pi(r) \;\in\; \{0\} \;\sqcup\; \{\pm w_1, \pm w_2, \pm w_3\} \;\sqcup\; A ,
\]
and $\pi(r) \in A$ if and only if $r \in A$.
\end{lemma}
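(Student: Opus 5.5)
The plan is to exploit the fact that the projection $\pi(r)$ is completely determined by the inner products of $r$ with the roots in $A$, and that these inner products are integers by \eqref{eq:integrality}. This places $\pi(r)$ in the weight lattice of the $\A_2$ root system $A$. Combined with the norm bound $\|\pi(r)\|^2 \le \|r\|^2 = 2$, this leaves only finitely many possibilities. (Here $A$ means the root system of $\sl_3^\gen$, as in Section~\ref{sec:three}; the phrase ``roots of $\gsm$'' just before the lemma should read ``roots of $\sl_3^\gen$''.)

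\emph{Step 1: $\pi(r)$ lies in the weight lattice.} Take $\beta_3,\,-\beta_2$, or equivalently the simple roots $\alpha_6,\alpha_7$, as a basis of $P$. For $r\in\Phi$ we have $\ip{\pi(r)}{\beta_k} = \ip{r}{\beta_k}\in\Z$ for all $k$. Hence $\pi(r)$ lies in the lattice dual to the root lattice $\Z A$ inside $P$. With the normalization $\|\beta\|^2=2$, this dual lattice is the $\A_2$ weight lattice, which is spanned by $w_1,w_2,w_3$ of \eqref{eq:weights}.

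\emph{Step 2: enumerate short lattice vectors.} Since $\pi$ is an orthogonal projection, $\|\pi(r)\|^2\le 2$. Next I would list the vectors of the $\A_2$ weight lattice by norm. Every element can be written as $a w_1 + b w_2$ with $a,b\in\Z$, and
\[
\|a w_1 + b w_2\|^2 = \tfrac{2}{3}(a^2 - ab + b^2).
\]
The values of $a^2-ab+b^2$ are $0,1,3,4,\dots$, so the norms at most $2$ are $0$, $\tfrac23$, and $2$. Norm $0$ gives the zero vector. Norm $\tfrac23$ gives the six vectors $\pm w_i$. Norm $2$ gives the six vectors $w_i-w_j$, which make up $A$. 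This proves the first claim.

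\emph{Step 3: the ``iff'' statement.} If $r\in A$, then $r\in P$, so $\pi(r)=r\in A$. Conversely, suppose $\pi(r)\in A$. Then $\|\pi(r)\|^2=2=\|r\|^2$, which forces the component of $r$ orthogonal to $P$ to vanish, so $r=\pi(r)\in A$.

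The only step requiring real care is Step 2, namely identifying the dual lattice correctly and checking that the admissible norms jump from $\tfrac23$ directly to $2$. The quadratic form computation above handles this. I do not expect any genuine obstacle.
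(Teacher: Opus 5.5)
Your proof is correct and follows the paper's argument: integrality of $\ip{r}{\delta}$ for $\delta \in A$ puts $\pi(r)$ in the $\A_2$ weight lattice, and the bound $\|\pi(r)\|^2 \le 2$ leaves only $0$, $\pm w_i$ and $A$. Your last step is slightly more direct than the paper's. The paper, after deducing $r = \pi(r) \in P$, invokes the classification of rank-two simply laced systems to conclude $r \in A$, whereas you simply note $r = \pi(r) \in A$. You are also right that $A$ here means the roots of $\sl_3^\gen$, not of $\gsm$.
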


\begin{proof}
For $\delta \in A$ we have $\ip{\pi(r)}{\delta} = \ip{r}{\delta} \in \Z$ by
\eqref{eq:integrality}, since $\delta \in P$. As $A$ is simply laced with
$\delta^\vee = \delta$, this says precisely that $\pi(r)$ lies in the weight
lattice $\Lambda$ of $A$.

Now $\Lambda$ is the hexagonal lattice whose nonzero vectors have squared lengths
$\tfrac{2}{3}, \, 2, \, \tfrac{8}{3}, \dots$; the six vectors of squared length
$\tfrac{2}{3}$ are $\pm w_1, \pm w_2, \pm w_3$, and the six of squared length
$2$ are the roots $A$. Since $\pi$ is an orthogonal projection,
$|\pi(r)|^2 \le |r|^2 = 2$, so only $0$, $\pm w_i$ and $A$ can occur.

Finally, $|\pi(r)|^2 = 2 = |r|^2$ forces $r = \pi(r) \in P$. But
$\Phi \cap P$ is a rank-two subsystem of $\Phi$ containing $A$, and the only
rank-two simply laced system containing $\A_2$ is $\A_2$ itself; hence
$r \in A$.
\end{proof}

Now partition the roots outside $A$ according to their projection to the generation plane:
\begin{equation} \label{eq:Phi}
  \Phi_0 := \{\, r \in \Phi : \pi(r) = 0 \,\}, \qquad
  \Phi_k := \{\, r \in \Phi : \pi(r) = \pm w_k \,\} \quad (k = 1,2,3).
\end{equation}
By Lemma~\ref{lem:trichotomy} this is a genuine partition, so
\begin{equation}\label{eq:partition}
  \Phi \;=\; A \;\sqcup\; \Phi_0 \;\sqcup\; \Phi_1 \;\sqcup\; \Phi_2
             \;\sqcup\; \Phi_3 .
\end{equation}

\begin{lemma}\label{lem:counts}
$|\Phi_0| = 30$, and $\Phi_0$ is a root subsystem of type $\A_5$. Moreover
$|\Phi_1| = |\Phi_2| = |\Phi_3| = 30$.
\end{lemma}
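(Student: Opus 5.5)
Note first that the paper's wording ``$A \subset \Phi$ is the set of roots of $\gsm$'' in Section~\ref{sec:tri} is a slip: $A$ is the $\A_2$ root system of $\sl_3^\gen$. I use that meaning throughout.

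The plan starts with $\Phi_0$, the set of roots orthogonal to the generation plane $P$. Equivalently, $\Phi_0$ is the set of roots orthogonal to every element of $A$. By \eqref{eq:centralizer}, applied with $\g' = \sl_3^\gen$, the roots in $\Phi_0$ are exactly the roots of the semisimple part of the centralizer $C_{\e_7}(\sl_3^\gen)$. To identify this subsystem I will work with the simple roots. $A$ is generated by $\alpha_6,\alpha_7$, so I seek the roots orthogonal to both. The simple roots $\alpha_1,\alpha_2,\alpha_3,\alpha_4$ are orthogonal to both. The highest root $\theta$ of $\E_7$ is orthogonal to every simple root except $\alpha_1$, so it is orthogonal to $\alpha_6,\alpha_7$. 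Then $\{-\theta,\alpha_2,\alpha_3,\alpha_4,\ldots\}$ suggests using the extended Dynkin diagram of $\E_7$. In that diagram $-\theta$ attaches to $\alpha_1$, and deleting $\alpha_5,\alpha_6,\alpha_7$ leaves the chain $-\theta,\alpha_1,\alpha_3,\alpha_4,\alpha_2$, which is of type $\A_5$.

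So $\Phi_0$ contains an $\A_5$ subsystem, giving $|\Phi_0| \ge 30$. For the reverse containment I will use that $\Phi_0 = \Phi \cap P^\perp$ is a root system in the 5-dimensional space $P^\perp$. It is also a closed subsystem, being an intersection of $\Phi$ with a subspace, so it cannot be strictly larger than $\A_5$ unless it is $\D_5$, $\A_5\times$ something of higher rank, and so on. Rank at most 5 together with containing $\A_5$ leaves only $\A_5$ or $\D_5$. I can rule out $\D_5$ cheaply: it would give $|\Phi_0|=40$, and the count below would then fail, since the remaining roots must be distributed evenly. More cleanly, $\D_5\oplus\A_2$ is not a subsystem of $\E_7$ of rank 7, because the centralizer of $\D_5$ in $\E_7$ is $\A_1\oplus\C$. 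A still more direct route is the standard count that the centralizer of $\A_2$ in $\E_7$ is $\A_5$, as in Oshima's tables.

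For the counts of the $\Phi_k$, I will first show that $|\Phi_1|=|\Phi_2|=|\Phi_3|$. The Weyl group of $A$, which is $S_3$, is generated by reflections $s_\delta$ for $\delta\in A$. These lie in the Weyl group of $\E_7$, so they preserve $\Phi$. They also commute with $\pi$, since they fix $P^\perp$ pointwise and act on $P$. They permute $\{\pm w_1,\pm w_2,\pm w_3\}$ transitively on the lines $\pm w_k$. Hence they permute the sets $\Phi_k$ transitively, which gives equal sizes. Then \eqref{eq:partition} gives $126 = 6 + 30 + 3|\Phi_k|$, so $|\Phi_k| = 30$.

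The main obstacle is pinning down $\Phi_0$ exactly as $\A_5$, as opposed to merely containing it. I will settle this by the extended-diagram argument combined with an independent count. For the count, I will compute the number of roots $r$ with $\ip{r}{\alpha_7}=1$: this equals $2h-4=32$ for $\E_7$, whose Coxeter number is $h=18$ (the general formula is $2h-4$ in the simply laced case). From this I will deduce the number of roots with nonzero projection onto $P$. A root with $\pi(r)=\pm w_k$ has inner products in $\{0,\pm1\}$ with the elements of $A$, in a pattern determined by $k$. Counting the pairs ($r$, $\delta\in A$) with $\ip{r}{\delta}=1$ then gives $6\cdot 32$ on one side, and on the other side a fixed multiple of $\sum_k|\Phi_k|$ plus the contribution from $A$ itself. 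Each $\pm w_k$ has inner product $+1$ with exactly two of the six elements of $A$, and each $\delta \in A$ has inner product $1$ with two other roots of $A$. So $6\cdot32 = 2\cdot(\text{number of roots outside }A\cup\Phi_0) + 12$, which gives $90$ roots outside $A\cup\Phi_0$. Therefore $|\Phi_0|=126-6-90=30$, confirming the $\A_5$ type independently.
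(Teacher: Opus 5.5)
Your proof is correct, but it takes a genuinely different route from the paper's. You are also right that ``the set of roots of $\gsm$'' in Section~\ref{sec:tri} should read ``of $\sl_3^\gen$''.

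\textbf{The paper's route.} The paper never exhibits $\Phi_0$. It uses a single fact: every $\E_7$ root is orthogonal to exactly $60$ others. Combined with $\{r \in \Phi : r \perp \beta_k\} = \Phi_0 \sqcup \Phi_k$, this gives $|\Phi_0| + |\Phi_k| = 60$ for each $k$. Summing over $k$ and comparing with $\sum_{i=0}^{3} |\Phi_i| = 120$ yields $|\Phi_0| = 30$ and $|\Phi_k| = 30$ at once. The type $\A_5$ then follows from the list of rank-$5$ simply laced systems, by root count.

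\textbf{Your route.} You do three things differently:
\begin{enumerate}
\item You exhibit the simple system $\{-\theta, \alpha_1, \alpha_3, \alpha_4, \alpha_2\}$ from the extended diagram, which gives $\Phi_0 \supseteq \A_5$ constructively.
\item You get $|\Phi_1| = |\Phi_2| = |\Phi_3|$ from the $S_3$ Weyl group of $A$, which commutes with $\pi$ and permutes the $w_k$.
\item You close the upper bound either by rank-$5$ irreducibility plus $3 \nmid 80$, or by double counting pairs with $\ip{r}{\delta} = 1$ using $2h - 4 = 32$.
\end{enumerate}

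\textbf{Comparison.} Step (3) carries the same information as the paper's $60$, since $126 = 2 + 64 + 60$, so the counting core is comparable. What your route buys is an explicit simple system for $\Phi_0$. This shows directly that $\Phi_0$ spans $P^\perp$, which the paper asserts without argument. It is also exactly the Borel--de Siebenthal picture (deleting $\alpha_5$ from the extended diagram) that the paper later borrows from Dynkin in Proposition~\ref{prop:sl_6}. The paper's route is shorter and does not depend on the labeling of simple roots.

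\textbf{Two small remarks.} First, $\D_5$ was never a real candidate: it contains no $\A_5$ subsystem, since for instance $|W(\A_5)| = 720$ does not divide $|W(\D_5)| = 1920$. Your elimination of it is valid but unnecessary. Second, you offer three alternative upper-bound arguments where one suffices: the final double count, combined with the explicit $\A_5$ in step (1), already proves the whole lemma.
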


\begin{proof}
The set $\Phi_0$ consists of the $\e_7$ roots orthogonal to the plane $P$.  This set spans 
$P^\perp \cap V$, which is $5$-dimensional.   To determine its cardinality
we can start by counting the roots of $\E_7$ orthogonal to any given root---it doesn't 
matter which one, since the Weyl group $W(\E_7)$ acts transitively on roots.  The  
answer is 60.   Since for each $k = 1,2,3$ we have
\[
  \{\, r \in \Phi : r \perp \beta_k \,\} \;=\; \Phi_0 \sqcup \Phi_k , \]
it follows that $|\Phi_0| + |\Phi_k| = 60$.  Summing over $k$ gives
\[
  3\,|\Phi_0| \;+\; \bigl(|\Phi_1| + |\Phi_2| + |\Phi_3|\bigr) \;=\; 180 .
\]
On the other hand the four set $\Phi_i$ partition $\Phi - A$
by Equation \eqref{eq:partition}, so
\[
  |\Phi_0| + |\Phi_1| + |\Phi_2| + |\Phi_3| \;=\; |\Phi| - |A| \;=\; 126 - 6
  \;=\; 120 .
\]
Subtracting the second identity from the first yields $2\,|\Phi_0| = 60$, hence $|\Phi_0| = 30$.
Since $|\Phi_0| + |\Phi_k| = 60$, we get
\[   |\Phi_1| = |\Phi_2| = |\Phi_3| \;=\; 30 . \]
Finally $\Phi_0$ is a root system in its own right, having rank $5$ and $30$ roots.  Among rank-five simply laced systems only $A_5$ has $30$ roots ($\D_5$ has $40$, and the largest reducible option $\D_4 \times \A_1$ has $26$), so $\Phi_0$ is of type $A_5$.
\end{proof}

We study the Lie subalgebra $\sl_6 \subset \e_6$ with root system $\Phi_0$ in Section 
\ref{sec:sl6}.

\section{Three copies of \texorpdfstring{$\so_{12}$}{so(12)} in \texorpdfstring{$\e_7$}{e7}}\label{sec:so12}

We now show that the three generation $\sl_2$'s are centralized by three different copies of $\so_{12}$.

\begin{lemma}\label{lem:d6}
For each $k$, $\Phi_0 \sqcup \Phi_k$ is a root subsystem of type $\D_6$.
\end{lemma}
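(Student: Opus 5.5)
The plan is to identify $\Phi_0 \sqcup \Phi_k$ as the set of roots of $\Phi$ orthogonal to $\beta_k$, and then recognize that set as a $\D_6$ root system.

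\emph{Step 1: $\Phi_0 \sqcup \Phi_k = \{ r \in \Phi : r \perp \beta_k \}$.} This identity was already used in the proof of Lemma~\ref{lem:counts}; I would justify it directly from Lemma~\ref{lem:trichotomy}. Since $\beta_k \in P$, we have $\ip{r}{\beta_k} = \ip{\pi(r)}{\beta_k}$. By \eqref{eq:orthogonality}, $\pi(r) = 0$ or $\pi(r) = \pm w_k$ gives zero, while $\pi(r) = \pm w_i$ with $i \neq k$ gives $\pm 1$. No root of $A$ is orthogonal to $\beta_k$, because $A$ is of type $\A_2$.

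\emph{Step 2: the set is a root subsystem of rank $6$ with $60$ roots.} The set $\Phi \cap \beta_k^\perp$ is closed under negation and under reflections $s_r$ with $r$ in the set, since these reflections fix $\beta_k$. Hence it is a root subsystem. Lemma~\ref{lem:counts} gives its size as $30 + 30 = 60$. For the rank: it lies in the $6$-dimensional space $\beta_k^\perp \cap V$. It contains $\Phi_0$, which spans the $5$-dimensional space $P^\perp \cap V$. It also contains a root $r \in \Phi_k$ with $\pi(r) = \pm w_k \neq 0$, so $r \notin P^\perp$. Therefore the rank is exactly $6$.

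\emph{Step 3: classification.} I need a simply laced root system of rank $6$ with $60$ roots. The irreducible candidates have root counts $\A_6$: $42$, $\D_6$: $60$, $\E_6$: $72$. Reducible candidates are bounded by sums of smaller counts. For example, $\D_5 \times \A_1$ gives $42$, $\A_5 \times \A_1$ gives $32$, $\D_4 \times \A_2$ gives $30$, and $\A_3 \times \A_3$ gives $24$. Each is well under $60$, because the rank-$5$ maximum is $40$ and adding rank-one pieces adds little. So the system must be $\D_6$.

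The only point needing care is the reducible case analysis in Step 3. I would dispatch it by noting that a rank-$6$ reducible system has at most $\max_{a+b=6}(N_a + N_b)$ roots, where $N_a$ is the largest root count in rank $a$ ($N_1=2$, $N_2=6$, $N_3=12$, $N_4=24$, $N_5=40$). This gives at most $42 < 60$.
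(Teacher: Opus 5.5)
Your proof is correct and takes essentially the same approach as the paper. Both use the trichotomy of Lemma~\ref{lem:trichotomy} to identify $\Phi_0 \sqcup \Phi_k$ with $\Phi \cap \beta_k^\perp$, take the count of $60$ from Lemma~\ref{lem:counts}, and pick out $\D_6$ by root count among rank-$6$ simply laced systems; you additionally spell out the rank computation and rule out reducible systems (at most $42 < 60$ roots), which the paper leaves implicit.
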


\begin{proof}
Since $\beta_k \in P$, we have $r \perp \beta_k$ if and only if
$\pi(r) \perp \beta_k$, where $\pi \maps V \to P$ is the orthogonal projection
onto the generation plane.  Run through the possibilities of
Lemma~\ref{lem:trichotomy}: $0$ is orthogonal to $\beta_k$; $\pm w_k$ is
orthogonal to $\beta_k$ by Equation \eqref{eq:orthogonality}; $\pm w_i$ for $i \neq k$
is not, again by Equation \eqref{eq:orthogonality}; and no root of $A$ is orthogonal to
$\beta_k$, because in a system of type $\A_2$ any two roots have inner product
$\pm 2$ or $\pm 1$. Hence the orthogonal roots are exactly
$\Phi_0 \sqcup \Phi_k$, of cardinality $30 + 30 = 60$ by
Lemma~\ref{lem:counts}. The set is closed and has rank $6$; among rank-$6$
simply laced systems, $60$ roots identifies $\D_6$.  (In particular, $\A_6$ has $42$ and $\E_6$
has $72$.)
\end{proof}

Write $\so_{12}(\beta_k)$ for the corresponding regular subalgebra, with
Cartan subalgebra $\h \cap \beta_k^\perp$.

\begin{proposition}\label{prop:sl2so12}
For each $k \in \{1,2,3\}$ the subalgebras $\sl_2(\beta_k)$ and
$\so_{12}(\beta_k)$ are centralizers of each other in $\e_7$. Their direct sum
is a maximal-rank subalgebra
$\sl_2(\beta_k) \oplus \so_{12}(\beta_k)$ of $\e_7$, and
\begin{equation}\label{eq:sl2so12}
\e_7 \;=\; \sl_2(\beta_k) \,\oplus\, \so_{12}(\beta_k)
\,\oplus\, (\mathbf{2} \otimes \mathbf{32})
\end{equation}
where $\mathbf{2} \otimes \mathbf{32}$ is the tensor product of the defining
representation of $\sl_2(\beta_k)$ with a chiral spinor representation of
$\so_{12}(\beta_k)$.
\end{proposition}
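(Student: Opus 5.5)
The plan is to use the centralizer formula \eqref{eq:centralizer} from the proof of Proposition~\ref{prop:sl_3} twice, and then decompose the remaining root spaces using the trichotomy of Lemma~\ref{lem:trichotomy}.

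\emph{Centralizer of $\sl_2(\beta_k)$.} This is a regular subalgebra with root system $\Phi' = \{\pm\beta_k\}$. By \eqref{eq:centralizer} its centralizer is $\beta_k^\perp \oplus \bigoplus_{r \in \Phi,\, r\perp\beta_k} (\e_7)_r$. By the proof of Lemma~\ref{lem:d6}, the roots orthogonal to $\beta_k$ are exactly $\Phi_0 \sqcup \Phi_k$. That subsystem has rank $6$ and spans $\beta_k^\perp \cap V$, so the Cartan part $\h\cap\beta_k^\perp$ has no leftover abelian piece. Hence the centralizer is exactly $\so_{12}(\beta_k)$.

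\emph{Centralizer of $\so_{12}(\beta_k)$.} Its Cartan is $\h\cap\beta_k^\perp$ and its roots are $\Phi_0\sqcup\Phi_k$. Their orthogonal complement in $\h$ is the line $\C\beta_k$. The roots of $\e_7$ orthogonal to all of $\Phi_0\sqcup\Phi_k$ are orthogonal to its span $\beta_k^\perp$, so they lie in $\R\beta_k$; these are only $\pm\beta_k$. So \eqref{eq:centralizer} gives $\C h_{\beta_k}\oplus(\e_7)_{\beta_k}\oplus(\e_7)_{-\beta_k} = \sl_2(\beta_k)$.

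The two subalgebras commute, and their intersection is trivial because $\sl_2(\beta_k)$ has trivial center. Their sum therefore contains all of $\h$, so it is of maximal rank, with dimension $3+66=69$.

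\emph{The complement.} It is spanned by the root spaces $(\e_7)_r$ with $r \in \Phi\setminus(\{\pm\beta_k\}\cup\Phi_0\cup\Phi_k)$. These are the other four roots of $A$ together with $\Phi_i\sqcup\Phi_j$ for $\{i,j\}=\{1,2,3\}\setminus\{k\}$, giving $4+60=64$ roots. This complement is stable under the adjoint action of $\m_k=\sl_2(\beta_k)\oplus\so_{12}(\beta_k)$, since $\m_k$ is a subalgebra and the Killing form makes the complement its orthogonal $\m_k$-invariant complement.

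\emph{Weights under $\sl_2(\beta_k)$.} By \eqref{eq:orthogonality} and Lemma~\ref{lem:trichotomy}, each such root has $\ip{r}{\beta_k}=\pm1$: for $r\in\Phi_i$ we have $\pi(r)=\pm w_i$ with $i\ne k$, and the four roots of $A$ other than $\pm\beta_k$ each pair with $\beta_k$ to give $\pm1$. So $h_{\beta_k}$ acts with eigenvalues $\pm1$ only. Complete reducibility then makes the complement isotypic of type $\mathbf 2\otimes U$, where $U$ is the $32$-dimensional space of $\sl_2(\beta_k)$-highest-weight vectors. Because $\so_{12}(\beta_k)$ commutes with $\sl_2(\beta_k)$, it acts on $U$.

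\emph{Identifying $U$ — the main obstacle.} The weights of $U$ are the projections to $\beta_k^\perp$ of the $32$ roots $r$ with $\ip{r}{\beta_k}=1$, namely $r-\tfrac12\beta_k$. Each has squared length $2-\tfrac12=\tfrac32$, and all have multiplicity one. In the standard model of $\D_6$, the minuscule spinor weights $(\pm\tfrac12,\dots,\pm\tfrac12)$ have squared length $\tfrac64=\tfrac32$ and form one Weyl orbit of size $32$ per chirality. So I would check two things. First, these weights are integral for $\D_6$ but not in the root lattice: a vector of squared length $3/2$ cannot lie in the $\D_6$ root lattice, whose norms are even. Second, a dominant weight of squared length $3/2$ must be a spinor weight, since the vector weight has norm $1$ and the fundamental weights beyond the spinors are longer. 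Hence every irreducible summand of $U$ is a half-spin representation, and the dimension count $32=32$ forces $U$ to be a single chiral spinor. It is not two copies of one $16$-dimensional representation, because $\D_6$ has no $16$-dimensional irreducible. The step needing most care is pinning down the spinor coset, i.e.\ showing the weights lie in a single coset of the root lattice so that only one chirality occurs. For that I would use connectivity: roots in the same $\sl_2(\beta_k)$-highest space differ by roots of $\Phi_0\sqcup\Phi_k$ whenever $\ip{r}{s}=1$, and this graph on the $32$ roots is connected.
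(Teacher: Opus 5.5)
Your proposal is correct and takes essentially the same route as the paper. It establishes mutual centralizers via root orthogonality, uses the eigenvalues $\pm 1$ of $h_{\beta_k}$ on the 64-dimensional complement to get $\mathbf{2}\otimes U$, and identifies $U$ from its 32 weights of squared length $3/2$. Your explicit use of the centralizer formula, and your observation that only the two spinor highest weights are dominant of squared length $3/2$, supply justification that the paper only asserts.

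Two small remarks. First, the dimension count $32=32$ already forces a single chirality, so the closing connectivity check is unnecessary. Second, when excluding other dominant weights you should also rule out non-fundamental ones such as $2\omega_1$ or $\omega_1+\omega_5$. This is immediate, because distinct fundamental weights of $\D_6$ have positive inner products.
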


\begin{proof}
By Lemma~\ref{lem:d6} the roots orthogonal to $\beta_k$ are
$\Phi_0 \sqcup \Phi_k$, of type $\D_6$; let $\so_{12}(\beta_k)$ be the
corresponding regular subalgebra, with Cartan subalgebra
$\h \cap \beta_k^\perp$. For $r \perp \beta_k$ one has
$|r \pm \beta_k|^2 = |r|^2 + |\beta_k|^2 = 4$, so $r \pm \beta_k \notin \Phi$
and thus $[\g_{\pm\beta_k}, \g_r] = 0$; the Cartan parts commute as well, so
$\sl_2(\beta_k)$ and $\so_{12}(\beta_k)$ commute with one another. Since
$\operatorname{rank}\sl_2 + \operatorname{rank}\so_{12} = 1 + 6 = 7 =
\operatorname{rank}\e_7$, the sum $\sl_2(\beta_k) \oplus \so_{12}(\beta_k)$
has maximal rank and each summand is the other's centralizer.

Write $M$ for the direct sum of the root spaces $\g_r$ for which
$r \not\perp \beta_k$ and $r \neq \pm\beta_k$, so that
\[ \e_7 = \sl_2(\beta_k) \oplus \so_{12}(\beta_k) \oplus M \]
as a representation of $\sl_2(\beta_k) \oplus \so_{12}(\beta_k)$.   The dimension
of $M$ is $133 - 3 - 66 = 64$.

First we identify $M$ as a representation of $\sl_2(\beta_k)$.  Because $\E_7$ is
simply laced, any two roots $r, \alpha$ satisfy
$\langle r, \alpha \rangle \in \{0, \pm 1, \pm 2\}$, with $\pm 2$ occurring only
for $r = \pm\alpha$.  Since the 64 roots $r$ contributing
to $M$ have $r \neq \pm\beta_k$ and $r \not\perp \beta_k$, these must have
$\langle r, \beta_k\rangle = \pm 1$. Thus $\ad(h_{\beta_k})$ has only the
eigenvalues $\pm 1$ on $M$, with eigenspaces $M_+$ and $M_-$, and the raising
and lowering operators $e_{\pm\beta_k}$ carry $M_\mp$ isomorphically onto
$M_\pm$. It follows that every irreducible $\sl_2(\beta_k)$-summand of $M$ is a copy of
the $2$-dimensional defining representation, which we call $\mathbf{2}$, and
\[
M \;\cong\; \mathbf{2} \otimes M_+ ,
\]
where $M_+$ is spanned by the $\e_7$ roots $r$ with $\langle r, \beta_k\rangle =1$.

It remains to identify $M_+$ as a representation of $\so_{12}(\beta_k)$.   It has dimension
$64/2 = 32$, and its $\so_{12}(\beta_k)$ weights 
are the orthogonal projections onto $\h \cap \beta_k^\perp$ of the $\e_7$ roots $r$ 
with $\langle r,\beta_k\rangle = 1$.  These $32$ weights are 
distinct, and each has the same length.  The only representations of $\so_{12}$ 
with these properties are the 
two chiral spinor (or `half-spin') representations, commonly called 
$\mathbf{32}$ and $\overline{\mathbf{32}}$.   These two  
are interchanged by the outer automorphism of $\so_{12}$ that swaps the two spinor nodes 
of the $\D_6$ diagram, so which one is $M_+$ depends on a convention about we identify $\so_{12}(\beta_k)$ with $\so_{12}$, and we choose a convention such that $M_+ \cong \textbf{32}$.
\end{proof}

From the proof we see that
\begin{equation} \label{eq:m_k_preliminary}
  \sl_2(\beta_k) \oplus \so_{12}(\beta_k) = \h \oplus \bigoplus_{r \in \{ \pm \beta_k\} \sqcup \Phi_0 \sqcup \Phi_k} (\e_7)_r 
\end{equation}
Thus, the three decompositions of $\e_7$ in Proposition \ref{prop:sl2so12} are truly 
different.  Proposition \ref{prop:intersection} says how they overlap.

\section{The standard \texorpdfstring{$\sl_6$}{sl(6)}} \label{sec:sl6}

There is a unique copy of $\sl_6$ lying in all three of the $\sl_2 \oplus 
\so_{12}$ subalgebras of $\e_7$ constructed in Proposition \ref{prop:sl2so12}.
This is the largest subalgebra of $\e_7$ that commutes with $\gsm$.

\begin{proposition} \label{prop:sl_6}
The centralizer of $\sl_3^\gen$ in $\e_7$ is isomorphic to $\sl_6$.   Its root system is $\Phi_0$.    
In fact $\sl_3^\gen$ and this copy of $\sl_6$ are centralizers of each other, and their
direct sum $\sl_3^\gen \oplus \sl_6$ is a maximal subalgebra of $\e_7$.  
\end{proposition}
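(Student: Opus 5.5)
We plan to use the centralizer formula \eqref{eq:centralizer} from the proof of Proposition~\ref{prop:sl_3}, applied with $\g = \e_7$ and $\g' = \sl_3^\gen$. This is a regular semisimple subalgebra with Cartan $\h' = P_\C$ (the complex span of $A$) and root system $A$. The formula gives
\[
C_{\e_7}(\sl_3^\gen) \;=\; A^\perp \;\oplus\; \bigoplus_{r \in \Phi \cap A^\perp} (\e_7)_r .
\]
Here $A^\perp \subset \h$ is the orthogonal complement of $P$, of dimension $7 - 2 = 5$. A root is orthogonal to $A$ exactly when $\pi(r) = 0$, so $\Phi \cap A^\perp = \Phi_0$. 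By Lemma~\ref{lem:counts}, $\Phi_0$ is of type $\A_5$, with $30$ roots, and it spans $A^\perp$, which is 5-dimensional. Hence the centralizer is the regular subalgebra with Cartan $A^\perp$ and root system $\Phi_0$, namely $\sl_6$, of dimension $5 + 30 = 35$. This proves the first two claims.

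Next we show that the two algebras are mutual centralizers. We apply \eqref{eq:centralizer} again, now with $\g' = \sl_6$, Cartan $A^\perp$ and roots $\Phi_0$. Since $\Phi_0$ spans $P^\perp$, we have $\Phi_0^\perp = P$ in $\h$. The roots orthogonal to all of $\Phi_0$ are those lying in $P$, and by Lemma~\ref{lem:trichotomy} these are exactly $A$. So $C_{\e_7}(\sl_6) = P \oplus \bigoplus_{r\in A}(\e_7)_r = \sl_3^\gen$. The two subalgebras commute and meet trivially, since their Cartans are orthogonal and their root sets are disjoint. Their sum is therefore a direct sum $\sl_3^\gen\oplus\sl_6$ of rank $2+5=7$ and dimension $43$.

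The main obstacle is maximality. The first step is to decompose $\e_7$ under $\sl_3^\gen\oplus\sl_6$. The complement of the subalgebra is $M = \bigoplus_{k=1}^3\bigoplus_{r\in\Phi_k}(\e_7)_r$, of dimension $90$. Its $\sl_3^\gen$-weights are $\pi(r) = \pm w_k$, and each occurs with multiplicity $15$. Since $\sl_3$ is simple, any proper invariant subspace of $M$ would be stable under the root vectors $e_{\pm\beta}$, which permute the weights $w_1,w_2,w_3$ (and $-w_i$) transitively. Weight considerations then show that $M \cong (\mathbf{3}\otimes U)\oplus(\bar{\mathbf{3}}\otimes U')$ with $\dim U=\dim U'=15$. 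We would identify $U$ as $\Lambda^2\C^6$ (or its dual), checking this from the $\sl_6$-weights of the roots with $\pi(r)=w_1$: they are the projections onto $P^\perp$, fifteen distinct weights of one length, which forces $\Lambda^2$ or $\Lambda^4$. In particular $U$ and $U'$ are irreducible, and they are mutually dual, since they are swapped by $r\mapsto -r$.

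Now let $\s$ be a subalgebra with $\sl_3^\gen\oplus\sl_6 \subsetneq \s$. Then $\s$ is a submodule, so it contains all of $\mathbf{3}\otimes U$ or all of $\bar{\mathbf{3}}\otimes U'$. Brackets of $\mathbf{3}\otimes U$ with itself land in $\bar{\mathbf 3}\otimes\Lambda^2 U$, since $\Lambda^2\mathbf 3=\bar{\mathbf 3}$. To see that this bracket is nonzero, pick two roots $r,s$ with $\pi(r)=w_1$, $\pi(s)=w_2$ and $\langle r,s\rangle=-1$. Then $r+s$ is a root with $\pi(r+s)=-w_3$, so $[\,(\e_7)_r,(\e_7)_s\,]\neq0$ lands in $\bar{\mathbf 3}\otimes U'$. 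By irreducibility, $\s$ then contains everything, so $\s = \e_7$. The only real check is the existence of such a pair $r,s$, which we would confirm by an explicit count or from the simple roots, for instance $\alpha_5$ and $\alpha_5+\alpha_4+\dots$ appropriately chosen. Alternatively, maximality can be cited from the Borel--de Siebenthal classification, since $\A_2\times\A_5$ arises by deleting the node with coefficient $3$ in the extended $\E_7$ diagram, and $3$ is prime.
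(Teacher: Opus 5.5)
Your proposal is correct. The first two steps are the paper's argument: it applies \eqref{eq:centralizer} to $\sl_3^\gen$ to get $P^\perp \oplus \bigoplus_{r\in\Phi_0}(\e_7)_r$, and handles the reverse centralizer by ``a similar argument.'' That argument is exactly your use of $\Phi_0^\perp = \C\otimes P$ together with $\Phi\cap P = A$ from Lemma~\ref{lem:trichotomy}.

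The routes differ on maximality. The paper simply cites Dynkin's Table 12, which is the same classification input as your Borel--de Siebenthal remark. Your main argument is self-contained instead. You split the 90-dimensional complement into the two non-isomorphic irreducibles $\mathbf 3\otimes U$ and $\bar{\mathbf 3}\otimes U'$; the paper only establishes this content later, in Lemma~\ref{lem:phik} and Theorem~\ref{thm:unified_gen}. Any strictly larger subalgebra therefore contains one of these summands, and a single nonzero bracket of root vectors forces it to contain the other. This avoids the classification and explains \emph{why} the subalgebra is maximal, at the cost of front-loading that representation theory.

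A few small repairs are needed:
\begin{itemize}
\item The $\sl_3^\gen$-decomposition is cleanest by characters: the only irreducible $\sl_3$-modules with all weights in $\{\pm w_k\}$ are $\mathbf 3$ and $\bar{\mathbf 3}$.
\item $\Lambda^2(\mathbf 3\otimes U)$ contains $\bar{\mathbf 3}\otimes \mathrm{Sym}^2U$, not $\bar{\mathbf 3}\otimes\Lambda^2U$. This is harmless, since all you actually use is $w_i+w_j=-w_k$ for $i\neq j$.
\item The pair $r,s$ exists for free. For each $i$, the roots with $\pi(r)=w_i$ are exactly $w_i+\lambda$ with $\lambda$ running over the weights of $U$. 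So for $r=w_1+\lambda$ and $s=w_2+\mu$ you get $\ip{r}{s}=-\tfrac13+\ip{\lambda}{\mu}$. Two weights of $\Lambda^2\C^6$ coming from disjoint index pairs have $\ip{\lambda}{\mu}=-\tfrac23$, and likewise for $\Lambda^4\C^6$.
\item The case $\s\supseteq\bar{\mathbf 3}\otimes U'$ is not addressed explicitly. It follows by applying $r\mapsto -r$.
\end{itemize}
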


\begin{proof}
Since $\sl_3^\gen$ is a regular simple subalgebra of $\e_7$, we can use same method 
as in Proposition \ref{prop:sl_3} to determine its centralizer: 
\[   C_{\e_7}(\sl_3^\gen) = P^\perp \oplus \bigoplus_{r \in \Phi_0} (\e_7)_r  \]
where $P^\perp$ is the orthogonal complement of the generation plane in $\h$, 
and $\Phi_0$ consists of all roots in $P^\perp$.
The root system $\Phi_0$ is of type $\A_5$ by Lemma \ref{lem:counts}.   Since $P$ has
dimension $2$, $P^\perp$ has dimension $7 - 2 = 5$, matching the rank
of $\A_5$, so there is no additional abelian summand and the centralizer is
isomorphic to $\sl_6$.

A similar argument shows that the centralizer of this $\sl_6$ is $\sl_3^\gen$.  Dynkin showed that 
for this $\sl_6 \subset \e_7$, the direct sum $\sl_3^\gen \oplus \sl_6$ is a maximal subalgebra
of $\e_7$: see \cite[Table 12]{Dynkin1957} and refer to Table 1 for his system of numbering roots
of $\e_7$.    In fact he described this maximal subalgebra using the procedure of removing a root,
described in Section \ref{sec:setup}, applied to the root we call $\alpha_5$.
\end{proof}

We call this copy of $\sl_6$ in $\e_7$ 
the \define{standard $\sl_6$}, and denote it as $\sl_6^\sm$.   

\begin{proposition}\label{prop:intersection}
Let $\m_k = \sl_2(\beta_k) \oplus \so_{12}(\beta_k)$.  Then
\[
  \m_1 \cap \m_2 \cap \m_3
  = \sl_6^\sm \, \oplus \, (\C \otimes P)
\]
for the 2-dimensional abelian Lie subalgebra $\C \otimes P \subset \e_6$, which is the 
complexification of the generation plane $P \subset \h$.  Moreover, the pairwise 
intersections of the $\m_k$ already equal the triple intersection.  
\end{proposition}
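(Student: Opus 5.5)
The plan is to compare root-space decompositions directly, using equation \eqref{eq:m_k_preliminary}. Each $\m_k$ is a regular subalgebra containing the full Cartan $\h$, namely
\[
\m_k = \h \oplus \bigoplus_{r \in \{\pm\beta_k\} \sqcup \Phi_0 \sqcup \Phi_k} (\e_7)_r .
\]
Since $\e_7 = \h \oplus \bigoplus_{r\in\Phi}(\e_7)_r$ is a direct sum, and each $\m_k$ is a sum of $\h$ and some root spaces, any intersection of such subalgebras is again $\h$ plus the root spaces indexed by the intersection of the index sets. This reduces the statement to a computation with subsets of $\Phi$.

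The key combinatorial step uses the partition \eqref{eq:partition}, $\Phi = A \sqcup \Phi_0 \sqcup \Phi_1 \sqcup \Phi_2 \sqcup \Phi_3$, together with $A = \{\pm\beta_1,\pm\beta_2,\pm\beta_3\}$. For $k \neq l$, this disjointness gives
\[
(\{\pm\beta_k\} \sqcup \Phi_0 \sqcup \Phi_k) \cap (\{\pm\beta_l\} \sqcup \Phi_0 \sqcup \Phi_l) = \Phi_0 .
\]
The reason is that $\pm\beta_k \neq \pm\beta_l$, the sets $\Phi_k$ and $\Phi_l$ are disjoint, and neither meets $A$ or $\Phi_0$. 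Hence for each pair $k \neq l$,
\[
\m_k \cap \m_l = \h \oplus \bigoplus_{r\in\Phi_0} (\e_7)_r ,
\]
and this is independent of the pair. It therefore also equals the triple intersection, which proves the "moreover" clause.

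It remains to identify $\h \oplus \bigoplus_{r\in\Phi_0}(\e_7)_r$ with $\sl_6^\sm \oplus (\C\otimes P)$. By Proposition \ref{prop:sl_6} and its proof, $\sl_6^\sm = P^\perp \oplus \bigoplus_{r\in\Phi_0}(\e_7)_r$, where $P^\perp$ is the orthogonal complement of $\C\otimes P$ in $\h$ with respect to the nondegenerate Killing form (nondegenerate on $P$ since $P$ is real positive definite). This gives $\h = P^\perp \oplus (\C\otimes P)$.

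Next, $\C\otimes P$ commutes with $\sl_6^\sm$. It commutes with $P^\perp$ because $\h$ is abelian. It commutes with $(\e_7)_r$ for $r \in \Phi_0$ because $[a,x] = \ip{r}{a}x = 0$ for $a \in P$, since $r \perp P$. So the sum is a direct sum of Lie algebras, with $\C\otimes P$ abelian of dimension 2. As a check, the dimensions agree: $35 + 2 = 37 = 7 + 30$.

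There is no real obstacle here. The only care needed is the bookkeeping that intersections of Cartan-stable subspaces are computed root space by root space, which holds because each root space is one-dimensional and the root-space decomposition is direct.
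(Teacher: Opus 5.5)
Your proposal is correct and follows the paper's own proof essentially step for step. You intersect the index sets $\{\pm\beta_k\}\sqcup\Phi_0\sqcup\Phi_k$ pairwise using the partition of $\Phi$ to get $\Phi_0$, then split $\h = (\C\otimes P)\oplus P^\perp$ and invoke Proposition~\ref{prop:sl_6} to recognize $P^\perp \oplus \bigoplus_{r\in\Phi_0}(\e_7)_r$ as $\sl_6^\sm$. Your explicit check that $\C\otimes P$ commutes with $\sl_6^\sm$, so that the sum is a direct sum of Lie algebras, is a small detail the paper leaves implicit.
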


\begin{proof}
Recall from Equation \eqref{eq:m_k_preliminary} that
\begin{equation} \label{eq:m_k}
\m_k = \h \oplus \bigoplus_{r \in \{ \pm \beta_k\} \sqcup \Phi_0 \sqcup \Phi_k} (\e_7)_r .
\end{equation}
If $i \ne k$, any root $r \in \Phi_k$ lies outside $\{ \pm \beta_i\} \sqcup \Phi_0 \sqcup \Phi_i$, 
and so do $\pm \beta_k$.   Thus,
\[
   \m_i \cap \m_k = \h \oplus \; \bigoplus_{r \in \Phi_0}  (\e_7)_r.  
\]
This intersection is $\sl_6^\sm \oplus (\C \otimes P)$, 
since $\h$ is the direct sum of the subspace $\C \otimes P \subset \h$ 
and its orthogonal complement in $\h$, which is the Cartan subalgebra 
of $\sl_6^\sm$, while the roots of $\sl_6^\sm$ are exactly those in $\Phi_0$,
by Proposition \ref{prop:sl_6}.

Since this is true for all $i \ne k$, the triple intersection is
\begin{equation} \label{eq:intersection}
   \m_1 \cap \m_2 \cap \m_3 = \h \oplus \; \bigoplus_{r \in \Phi_0}  (\e_7)_r = \sl_6^\sm \oplus (\C \otimes P).  
\end{equation}
proving the proposition.
\end{proof}

\section{The standard $\sl_5$}\label{sec:sl5}

Sitting between $\sl_6^\sm$ and $\gsm$ there is a unique copy of $\sl_5$.   In some
ways this is more familiar to particle physicists than the 
previous constructions.   The inclusion $\GSM \subset \SU(5)$ from 
Equation \eqref{eq:GSM},  which is the basis of the Georgi--Glashow grand unified
theory \cite{GeorgiGlashow}, gives
an inclusion of complex Lie algebras $\gsm \subset \sl_5$ that matches the one here.   
However, we obtain it in a different way.

\begin{proposition}\label{prop:sl_5}
There is a unique subalgebra of $\sl_6^\sm$ isomorphic to $\sl_5$ that contains $\gsm$.
Moreover this $\sl_5$ subalgebra is regular in $\e_7$.  
\end{proposition}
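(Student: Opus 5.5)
We first check that $\gsm \subset \sl_6^\sm$. By construction $\gsm$ centralizes $\sl_3^\gen$, so $\gsm \subseteq C_{\e_7}(\sl_3^\gen) = \sl_6^\sm$ by Proposition~\ref{prop:sl_6}. Moreover $\gsm$ is regular with Cartan $\h_{\sm} \subset \h \cap P^\perp$, which is the Cartan of $\sl_6^\sm$. Inside $\sl_6^\sm$, with root system $\Phi_0$ of type $\A_5$, the semisimple part of $\gsm$ has root system $\Phi' \subset \Phi_0$ of type $\A_2 \times \A_1$, generated by $\alpha_1,\alpha_2,\alpha_3$.

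\textbf{Existence.} Here we use the chain of Table~\ref{tab:e_n_series}. The regular $\sl_5 = \e_4$ with simple roots $\alpha_1,\dots,\alpha_4$ contains $\gsm$, since $\gsm$ is its maximal Levi subalgebra obtained by removing $\alpha_4$, and the center $h_4$ lies in the Cartan of this $\e_4$. All of $\alpha_1,\dots,\alpha_4$ are orthogonal to $\alpha_6,\alpha_7$, which span $P$, so these roots lie in $\Phi_0$. Hence this $\sl_5$ lies in $\sl_6^\sm$, and it is regular in $\e_7$.

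\textbf{Uniqueness.} Let $\s \subset \sl_6^\sm$ with $\s \cong \sl_5$ and $\gsm \subseteq \s$.
\begin{enumerate}
\item Since $\gsm$ contains a Cartan subalgebra of $\s$, namely $\h_\sm$, of dimension $4 = \operatorname{rank}\sl_5$, the algebra $\s$ is stable under $\ad \h_\sm$.
\item Decompose $\sl_6^\sm$ under $\h_\sm$. The Cartan of $\sl_6^\sm$ is $5$-dimensional, and $\h_\sm$ is a hyperplane in it. Hence $\s$ is a sum of its zero weight space (contained in the $\h_\sm$-centralizer) and $\h_\sm$-weight spaces.
\item Identify the zero weight space. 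We need no root of $\Phi_0$ to vanish on $\h_\sm$. This holds because $\h_\sm$ is spanned by $h_1,\dots,h_4$ and no root is orthogonal to all of $\alpha_1,\dots,\alpha_4$ while lying in $\Phi_0$. Indeed, such roots would be in the centralizer of $\gsm$ intersected with $\sl_6^\sm$, which is $\C^2$-part only by Proposition~\ref{prop:sl_3}. So the zero weight space is the Cartan $\h_6$ of $\sl_6^\sm$.
\item Check that the nonzero weights are distinct on $\h_\sm$. If so, each $\h_\sm$-weight space of $\sl_6^\sm$ is a single root space $(\e_7)_r$ with $r \in \Phi_0$. Then $\s = (\s \cap \h_6) \oplus \bigoplus_{r \in R} (\e_7)_r$ for some $R \subset \Phi_0$, and since $\s \cap \h_6$ is a Cartan of $\s$ containing $\h_\sm$, it equals $\h_\sm$.
\item Conclude that $R$ is an $\A_4$ subsystem of $\Phi_0 \cong \A_5$ containing $\Phi'$ and spanning exactly $\h_\sm$, i.e.\ $R = \Phi_0 \cap \operatorname{span}(\alpha_1,\dots,\alpha_4)$. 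Since $\A_4$ subsystems of $\A_5$ correspond to $5$-element subsets of $6$ coordinates, this $R$ is unique.
\end{enumerate}

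The main obstacle is step~(4): distinct roots of $\Phi_0$ must restrict differently to $\h_\sm$. Equivalently, no difference of two roots in $\Phi_0$ may be proportional to the normal of $\h_\sm$ in $\h_6$. I would verify this concretely in $\A_5$ coordinates $\epsilon_a - \epsilon_b$, with $\h_\sm$ the hyperplane defined by one linear functional. That functional separates the $5$ coordinates of the $\sl_5$ block from the sixth, with generic coefficients, so it can be checked directly.
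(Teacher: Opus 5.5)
Your proof is correct. The existence half is essentially the paper's: the paper builds $\h_\sm \oplus \bigoplus_{r\in\Phi\cap U}(\e_7)_r$ and checks $U\subseteq P^\perp$, while you take the $\e_4$ of the chain and note $\alpha_1,\dots,\alpha_4\perp\alpha_6,\alpha_7$.

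Your uniqueness argument takes a genuinely different route. The paper works in the defining representation $\C^6$ of $\sl_6^\sm$. Since the irreducibles of $\sl_5$ have dimensions $1,5,10,\dots$, any $\s\cong\sl_5$ in $\sl_6^\sm$ is $\sl(W)$ for a splitting $\C^6=W\oplus L$ with $\s$ killing $L$. Then $L$ is forced to be the unique $\gsm$-fixed line $T$ in $\C^6|_{\gsm}=X\oplus Y\oplus T$.

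You instead show that any such $\s$ is automatically regular. The centralizer of $\h_\sm$ in $\sl_6^\sm$ is exactly the Cartan $\h_6$, by Proposition~\ref{prop:sl_3}, and its nonzero weight spaces are one-dimensional; a root count then finishes. Your argument needs only $\h_\sm\subseteq\s$. So it proves the slightly stronger fact that $\sl_5^\sm$ is the only $\sl_5$ in $\sl_6^\sm$ containing the Cartan of $\gsm$, and it stays in the root-system language used elsewhere in the paper. The paper's route avoids any coordinate check.

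Two points need tightening:
\begin{enumerate}
\item Your step~(4) does hold, but not because of ``generic coefficients.'' The $\A_4$ roots from your existence step span $\h_\sm$, and every $\A_4$ subsystem of $\A_5$ has the form $\{\epsilon_a-\epsilon_b : a,b\neq c\}$. So you may take $\h_\sm=\{x\in\h_6 : x_6=0\}$, whose normal in $\h_6$ is proportional to $(1,1,1,1,1,-5)$. A difference of two roots has at most four nonzero coordinates, so it is never a nonzero multiple of that normal.
\item In step~(5) you should say why $R$ lies in the real span $U$ of $\alpha_1,\dots,\alpha_4$. Restriction to $\h_\sm$ is injective and the roots of $\s$ are symmetric, so $R=-R$. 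Hence $h_r\in\s\cap\h_6=\h_\sm$ for each $r\in R$, giving $R\subseteq\Phi_0\cap U$. Then $|R|=20=|\Phi_0\cap U|$ gives equality.
\end{enumerate}
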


\begin{proof}
We first construct a regular $\sl_5$ between $\gsm$ and $\e_7$, then show it lies
in $\sl_6^\sm$, and finally show it is the only $\sl_5$ between $\gsm$ and
$\sl_6^\sm$---regular or not.

\medskip
\emph{A regular $\sl_5$ containing $\gsm$.}
Since $\gsm$ is a regular subalgebra of $\e_7$, its Cartan subalgebra is
$\h_\sm = \h \cap \gsm$; let $U = \h_\sm \cap V$ be its real form, so that
$\dim_\R U = 4$ as $\gsm$ has rank $4$.

Let $\mathfrak{a}$ be any regular subalgebra of type $\A_4$ containing $\gsm$.
Being regular, $\mathfrak{a}$ has Cartan subalgebra $\mathfrak{a} \cap \h$.
This intersection contains $\h_\sm$, and both have dimension
$4$, so $\mathfrak{a} \cap \h = \h_\sm$.  Hence the roots of $\mathfrak{a}$ form a
subsystem of type $\A_4$ that lies in $\Phi \cap U$ and spans $U$. A direct
count in the coordinates of Section~\ref{sec:setup} shows $\Phi \cap U$ consists
of exactly $20$ roots and is itself a subsystem of type $\A_4$. Since a rank-$4$
subsystem of an $\A_4$ system is the whole system, the roots of $\mathfrak{a}$
are exactly $\Phi \cap U$. Thus there is at most one such $\mathfrak{a}$, namely
\[
  \sl_5^\sm \;:=\; \h_\sm \oplus \bigoplus_{r \in \Phi \cap U} (\e_7)_r,
\]
and one can check directly that this is a subalgebra of type $\A_4$; it is the
unique regular $\sl_5$ in $\e_7$ containing $\gsm$.

\medskip
\emph{It lies in $\sl_6^\sm$.}
We must show $\Phi \cap U \subseteq \Phi_0$. The generation plane $P$ is spanned
by the roots $A$, and $\h_\sm$ is orthogonal to $P$: the center of
$\gsm$ and the roots of $\sl_2 \oplus \sl_3 \subset \gsm$ all lie in
$C_{\e_7}(\sl_3^\gen)$, whose Cartan is $\h \cap P^\perp$ by
Proposition~\ref{prop:sl_6}. Hence $U \subseteq P^\perp$, so every root in
$\Phi \cap U$ is orthogonal to $P$, i.e.\ lies in $\Phi_0$. Therefore
$\sl_5^\sm \subseteq \sl_6^\sm$.

\medskip
\emph{Uniqueness in $\sl_6^\sm$.}
Now let $\s \subseteq \sl_6^\sm$ be \emph{any} subalgebra isomorphic to $\sl_5$
with $\gsm \subseteq \s$; we show $\s = \sl_5^\sm$, so that the regularity
hypothesis was superfluous. Identify $\sl_6^\sm$ with $\sl(\C^6)$ acting on its
defining representation $\C^6$. Since $\sl_5$ has no faithful representation of
dimension $6$ and its smallest nontrivial irreducible representations have
dimension $5$, the action of $\s$ on $\C^6$ decomposes as
\[
  \C^6 \;=\; W \oplus L, \qquad \dim W = 5, \quad \dim L = 1,
\]
with $W$ a faithful irreducible representation of $\s$ and $L$ trivial. Then
$\s \subseteq \sl(W) \subseteq \sl(\C^6)$, and comparing dimensions
($\dim \s = 24 = \dim \sl(W)$) gives $\s = \sl(W)$.  Thus, $\s$ is determined by the
hyperplane $W \subset \C^6$, and $\gsm \subseteq \s$ acts as zero on the line
$L$.

We claim the choice of the line $L$ is forced, and with it $W$ and $\s$. Under $\gsm$, 
the defining representation of $\sl_6$ decomposes as
\[
  \C^6\big|_{\gsm} \;=\; X \oplus Y \oplus T,
\]
where $X$ is a $3$-dimensional irreducible representation of $\sl_3 \subseteq
\gsm$ on which $\sl_2$ acts trivially, $Y$ is a $2$-dimensional irreducible
representation of $\sl_2 \subseteq \gsm$ on which $\sl_3$ acts trivially, and $T$
is a line on which $\sl_3$, $\sl_2$, and the center of $\gsm$ all act trivially.
Since $X$ and $Y$ are nontrivial irreducible representations of the simple summands
of $\gsm$, neither contains a nonzero $\gsm$-invariant vector; hence the space of
$\gsm$-invariants in $\C^6$ is exactly $T$. In particular the trivial
representation of $\gsm$ occurs in $\C^6$ with multiplicity one, so the
$\gsm$-fixed line $L$ must equal $T$. Then $W = \C^6/L$ is determined, and so is
$\s = \sl(W)$.

There is therefore at most one $\sl_5$ between $\gsm$ and $\sl_6^\sm$. As
$\sl_5^\sm$ is one, it is the only one, completing the proof.
\end{proof}

We call the unique $\sl_5$ subalgebra containing $\gsm$ and contained in $\sl_6^\sm$
the \define{standard $\sl_5$},  and denote it as $\sl_5^\sm$.   This subalgebra is generated
by the elements $e_i, f_i$ and $h_i$ for $i = 1,2,3,4,$ so we can visualize it as follows:
\vskip 0.5em
\begin{center}
\begin{tikzpicture}[
    node/.style={circle,draw,fill=white,inner sep=0pt,minimum size=7pt},
    every node/.style={},
    greennode/.style={circle,draw,darkgreen, ultra thick,fill=darkgreen,inner sep=0pt,minimum size=7pt},
    every node/.style={},
    lightpurplenode/.style={circle,draw,ultra thick,purple,inner sep=0pt,minimum size=7pt},
    every node/.style={},
    lightrednode/.style={circle,draw,ultra thick,red,inner sep=0pt,minimum size=7pt},
    every node/.style={},
    rednode/.style={circle,draw,ultra thick,brickred,fill=red,inner sep=0pt,minimum size=7pt},
    every node/.style={},
    x=1.3cm,y=1.3cm]

  \node[greennode] (a1) at (0,0) {};
  \node[greennode] (a3) at (1,0) {};
  \node[greennode] (a4) at (2,0) {};
  \node[node] (a5) at (3,0) {};
  \node[node] (a6) at (4,0) {};
  \node[node] (a7) at (5,0) {};
  \node[greennode] (a2) at (2,1) {};

  \draw (a1) -- (a3) -- (a4) -- (a5) -- (a6) -- (a7);
  \draw (a4) -- (a2);

  \node[] at (1,0.8) {\color{darkgreen} $\sl_5^\sm$};
 
  \node[below=3pt] at (a1) {$\alpha_1$};
  \node[below=3pt] at (a3) {$\alpha_3$};
  \node[below=3pt] at (a4) {$\alpha_4$};
  \node[below=3pt] at (a5) {$\alpha_5$};
  \node[below=3pt] at (a6) {$\alpha_6$};
  \node[below=3pt] at (a7) {$\alpha_7$};
  \node[right=3pt] at (a2) {$\alpha_2$};

\end{tikzpicture}
\end{center}

\section{Three generations omitting right-handed neutrinos}\label{sec:generations}

So far, starting from only a good choice of subalgebra $\gsm \subset \e_7$ as in Section \ref{sec:setup}, 
we have constructed a chain of subalgebras:
\[    \gsm \subset \sl_5^\sm\subset \sl_6^\sm \subset \e_7. \]
Each one is uniquely determined:
\begin{itemize} 
\item  The centralizer of $\gsm$ contains a unique copy of $\sl_3$, namely $\sl_3^\gen$.
(Proposition \ref{prop:sl_3}.)  
\item  The centralizer of $\sl_3^\gen$ is copy of $\sl_6$, namely $\sl_6^\sm$. 
(Proposition \ref{prop:sl_6}.)
\item   There is a unique copy of $\sl_5$ between $\gsm$ and $\sl_6^\sm$, 
namely $\sl_5^\sm$.    (Proposition \ref{prop:sl_5}.)
\end{itemize}
We are now ready to exploit this chain.    

For each $k = 1,2,3$ we identify the 30-dimensional space spanned by the root spaces $(\e_7)_r$ with $r \in \Phi_k$ as a representation first of $\sl_6^\sm$, then of $\sl_5^\sm$, and finally of $\gsm$.  In the end, we obtain the usual representation of $\gsm$ on one generation of Standard Model fermions together with their antiparticles, \emph{except for} the right-handed neutrino and its antiparticle.

\begin{lemma}\label{lem:phik}
For each $k = 1,2,3$ we have
\[   \bigoplus_{r \in \Phi_k} (\e_7)_r  \;\cong\; \Lambda^2\C^6 \,\oplus\, \Lambda^4\C^6  \]
as representations of $\sl_6^\sm$.
\end{lemma}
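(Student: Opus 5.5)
Here is the plan. The space $S_k = \bigoplus_{r \in \Phi_k} (\e_7)_r$ is $\sl_6^\sm$-invariant. Indeed, for $s \in \Phi_0$ and $r \in \Phi_k$ we have $\pi(r+s) = \pi(r) = \pm w_k$, so any root $r+s$ again lies in $\Phi_k$. The Cartan part $\h \cap P^\perp$ of $\sl_6^\sm$ acts diagonally on the root spaces. So $S_k$ is a 30-dimensional representation of $\sl_6^\sm$, and its weights are the projections of the roots in $\Phi_k$ onto $P^\perp$. I would identify it by its weights and then use complete reducibility.

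First I would split $S_k = S_k^+ \oplus S_k^-$ according to whether $\pi(r) = w_k$ or $\pi(r) = -w_k$. Each piece is $\sl_6^\sm$-invariant by the same argument. The map $r \mapsto -r$ exchanges them, and since the Killing form pairs $(\e_7)_r$ with $(\e_7)_{-r}$ invariantly, $S_k^- \cong (S_k^+)^*$. Next I would show $|S_k^\pm| = 15$ each. One way is that the Weyl group element of $\sl_3^\gen$ given by the reflection $s_{\beta_k}$ fixes $w_k$, whereas the reflections in $\beta_i$ for $i\neq k$ permute the $w$'s. It is simpler to use $-1 \in W(\E_7)$, which sends $\Phi_k^+$ to $\Phi_k^-$, so the two halves have equal size and each contains 15 roots.

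Now I identify $S_k^+$. For $r \in \Phi_k^+$ the $\sl_6^\sm$-weight is $\mu = r - w_k$, with $|\mu|^2 = 2 - \tfrac23 = \tfrac43$. These 15 weights are distinct, since $r$ is determined by $\mu$. All have the same norm, so they form a single Weyl orbit or a union of orbits of minuscule type. In $\A_5$ with roots of norm 2, the weights of $\Lambda^j\C^6$ have squared norm $j(6-j)/6$. This gives $\tfrac56$ for $j=1,5$, $\tfrac43$ for $j=2,4$, and $\tfrac32$ for $j=3$. Norm $\tfrac43$ forces the weights to lie in the orbits of $\Lambda^2$ or $\Lambda^4$, each of size 15. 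A dominant weight of norm $\tfrac43$ in the $\A_5$ weight lattice must be $\omega_2$ or $\omega_4$; this is the step needing care, and I would check it against the list of small dominant weights ($\omega_1+\omega_5$ is the adjoint, of norm 2, and so on). Once this is settled, 15 distinct weights of one minuscule type give $S_k^+ \cong \Lambda^2\C^6$ or $\Lambda^4\C^6$. A minuscule weight multiset determines the module, since there are no other dominant weights below it. Then $S_k^- \cong (S_k^+)^* $ is the other one, because $(\Lambda^2\C^6)^* \cong \Lambda^4\C^6$. The sum is $\Lambda^2\C^6 \oplus \Lambda^4\C^6$ whichever labelling occurs.

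The main obstacle is excluding mixed or non-minuscule possibilities cleanly, and confirming the count of 15 per half. For the latter, I would also give an independent check using Proposition \ref{prop:sl2so12}. The $\sl_2(\beta_k)$-invariant part of the $\mathbf{32}$ of $\so_{12}(\beta_k)$ restricted to $\sl_6\oplus\C$ decomposes as $\Lambda^{\mathrm{even}}\C^6$, so the 30 roots of $\Phi_k$ inside $\so_{12}(\beta_k)$ form the adjoint complement $\so_{12}/(\sl_6\oplus\C) \cong \Lambda^2\C^6\oplus\Lambda^4\C^6$. This is the standard branching $\so_{12} \supset \mathfrak{gl}_6$. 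I expect this route to be the cleanest: $\Phi_0 \sqcup \Phi_k$ is $\D_6$ with $\A_5$ subsystem $\Phi_0$, and the complement in $\D_6$ of an $\A_5$ Levi gives exactly $\Lambda^2 \oplus (\Lambda^2)^*$.
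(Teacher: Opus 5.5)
Your main argument is correct and is the paper's proof: invariance of $S_k$ under $\sl_6^\sm$, the split into $S_k^\pm$ by the sign in $\pi(r)=\pm w_k$, duality and the count $15+15$ via $r\mapsto -r$, and identification of $S_k^+$ from its $15$ distinct weights of equal length. Your computation that this common squared length is $\tfrac43$, and that only $\omega_2,\omega_4$ are dominant weights of that norm, supplies the justification the paper leaves implicit.

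In your optional $\so_{12}$ check, the sentence about the $\mathbf{32}$ is a non sequitur, since $S_k$ lies inside $\so_{12}(\beta_k)$ itself rather than in $\mathbf{2}\otimes\mathbf{32}$ (which instead contains halves of $S_i$ and $S_j$ for $i,j\neq k$); the $\mathfrak{gl}_6$-branching of the adjoint of $\so_{12}$ that you end with does, however, independently give the result.
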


\begin{proof}
For short let us write
\[   N_k =  \bigoplus_{r \in \Phi_k} (\e_7)_r  .\]
Note that $N_k$ is an $\sl_6^\sm$-submodule of $\e_7$: for $s \in \Phi_0$ and 
$r \in \Phi_k$ we have $\pi(r+s) = \pi(r) = \pm w_k$, so whenever $r + s$ is a root it again lies in
$\Phi_k$, showing that $[(\e_7)_s, (\e_7)_r] \subseteq N_k$.

Split $\Phi_k$ into two subsets, defining
\[
  \Phi_k^{\pm} \;=\; \{\, r \in \Phi_k : \pi(r) = \pm w_k \,\},
  \qquad \Phi_k = \Phi_k^{+} \sqcup \Phi_k^{-},
\]
and set 
\[   N_k^\pm = \bigoplus_{r \in \Phi_k^\pm} (\e_7)_r .\]
Since the projection $\pi \maps V \to P$ is unaffected by adding a root of
$\Phi_0$, each $N_k^\pm$ is itself an
$\sl_6^\sm$-submodule, and $N_k = N_k^+ \oplus N_k^-$.  The reflection
$r \mapsto -r$ carries $\Phi_k^+$ bijectively onto $\Phi_k^-$, so
$|\Phi_k^+| = |\Phi_k^-| = \tfrac12|\Phi_k| = 15$ by Lemma~\ref{lem:counts},
and $N_k^-$ is the dual of $N_k^+$.

$N_k^+$ is a 15-dimensional representation of $\sl_6^\sm$ with 15 distinct weights 
all of the same  length: namely, the roots of $\Phi_k^+$ projected down to the Cartan of 
$\sl_6^\sm$.   The only representations of $\sl_6$ with these properties are its fundamental
representations on $\Lambda^2\C^6$ and its dual $\Lambda^4\C^6$.   We must thus have
$N_k \cong \Lambda^2\C^6 \,\oplus\, \Lambda^4\C^6$.
\end{proof}

Next we restrict along the inclusion $\sl_5^\sm \subset \sl_6^\sm$.  

\begin{theorem} \label{thm:gen}
For each $k = 1,2,3$ we have
\[
\bigoplus_{r \in \Phi_k} (\e_7)_r \cong \Lambda^1\C^5 \oplus \Lambda^2\C^5 \oplus \Lambda^3\C^5 \oplus \Lambda^4\C^5
\]
as representations of $\sl_5^\sm$.
\end{theorem}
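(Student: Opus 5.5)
The plan is to start from Lemma~\ref{lem:phik}, which identifies $N_k = \bigoplus_{r \in \Phi_k} (\e_7)_r$ with $\Lambda^2\C^6 \oplus \Lambda^4\C^6$ as a representation of $\sl_6^\sm$. We then restrict this along the inclusion $\sl_5^\sm \subset \sl_6^\sm$. The key input is the structure of that inclusion established in the proof of Proposition~\ref{prop:sl_5}. There $\sl_5^\sm = \sl(W)$ for a decomposition $\C^6 = W \oplus L$ with $\dim W = 5$, and $\sl_5^\sm$ acts trivially on the line $L$. Consequently $\C^6|_{\sl_5^\sm} \cong \C^5 \oplus \C$, where $\C^5 = W$ is the defining representation and $\C$ is trivial.

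Next we apply the exterior-power identity $\Lambda^p(W \oplus L) \cong \Lambda^p W \oplus (\Lambda^{p-1} W \otimes L)$. Since $L$ is a trivial one-dimensional representation of $\sl_5^\sm$, this is an isomorphism of $\sl_5^\sm$-representations:
\[
\Lambda^p\C^6\big|_{\sl_5^\sm} \;\cong\; \Lambda^p\C^5 \oplus \Lambda^{p-1}\C^5 .
\]
Taking $p = 2$ gives $\Lambda^2\C^5 \oplus \Lambda^1\C^5$, and taking $p = 4$ gives $\Lambda^4\C^5 \oplus \Lambda^3\C^5$. Adding these yields exactly $\Lambda^1\C^5 \oplus \Lambda^2\C^5 \oplus \Lambda^3\C^5 \oplus \Lambda^4\C^5$. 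As a sanity check on dimensions, $5+10+10+5 = 30 = |\Phi_k|$, in agreement with Lemma~\ref{lem:counts}.

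The one point needing care is the identification of $\sl_6^\sm$ with $\sl(\C^6)$ in Lemma~\ref{lem:phik}. It is only defined up to an outer automorphism, which swaps $\C^6$ with its dual and $\Lambda^2$ with $\Lambda^4$. This does no harm. Changing the identification replaces $\C^6$ by $(\C^6)^*$, which restricts to $(\C^5)^* \oplus \C$, and the resulting sum is the same set $\{\Lambda^1,\dots,\Lambda^4\}$ of $\C^5$ up to relabeling via $\Lambda^p(\C^5)^* \cong \Lambda^{5-p}\C^5$. The same relabeling absorbs the analogous ambiguity in identifying $\sl_5^\sm$ with $\sl_5$, so the statement is convention-independent. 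I do not expect a real obstacle. The only step that deserves explicit mention is that $\gsm$, and hence $\sl_5^\sm = \sl(W)$, fixes a complementary line $L$. This was already shown in Proposition~\ref{prop:sl_5}.

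Alternatively, one could verify the result at the level of weights. This would use the root-theoretic description $\sl_5^\sm = \h_\sm \oplus \bigoplus_{r \in \Phi\cap U}(\e_7)_r$: project the $15$ roots of $\Phi_k^+$ onto $U$ and check that they split into the $10$ weights of $\Lambda^2$ and the $5$ weights of $\Lambda^1$ (or their duals). The representation-theoretic argument above avoids this computation, so I would use it.
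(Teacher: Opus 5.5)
Your proposal is correct and follows the paper's proof: you take the $\sl_6^\sm$-module $\Lambda^2\C^6 \oplus \Lambda^4\C^6$ from Lemma~\ref{lem:phik}, restrict it to $\sl_5^\sm$ using $\C^6 \cong \C^5 \oplus \C$, and split each exterior power accordingly. Your explicit appeal to the splitting $\C^6 = W \oplus L$ from Proposition~\ref{prop:sl_5} and your remark that the duality convention is harmless make the argument slightly more careful than the paper's version, but the argument itself is the same.
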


\begin{proof}
By Lemma~\ref{lem:phik} the span is $\Lambda^2\C^6 \oplus \Lambda^4\C^6$ as an
$\sl_6^\sm$-module.  Restricting to $\sl_5^\sm \subset \sl_6^\sm$ and splitting
$\C^6$ as $\C^5 \oplus \C$, we have
\[
\begin{array}{ccl}
  \Lambda^2(\C^5 \oplus \C) &\cong & (\Lambda^0\C^5 \otimes \Lambda^2 \C) \oplus 
(\Lambda^1\C^5 \otimes \Lambda^1 \C) \oplus 
(\Lambda^2\C^5 \otimes \Lambda^0 \C) \\ [3pt]
&\cong& \Lambda^1 \C^5 \oplus \Lambda^2\C^5 
\end{array}
\]
as representations of $\sl_5^\sm$, and dually 
\[
  \Lambda^4\C^6 \;\cong\; \Lambda^4\C^5 \oplus \Lambda^3\C^5
\]
so the proposition follows.
\end{proof}

Restricting this representation to $\gsm \subset \sl_5^\sm$ we obtain the
representation of $\gsm$ on one generation of Standard Model fermions together
with their antiparticles \emph{omitting} the right-handed neutrino and its antiparticle,
which transform trivially under $\gsm$ and would naturally lie in $\Lambda^0 \C^5$
and $\Lambda^5 \C^5$.   For details of how $\Lambda \C^5$ captures one generation 
of  fermions and their antiparticles see \cite[Table 4]{BaezHuerta2010}, which we reproduce 
here in Table \ref{tab:su5code}.   In this table we give $\C^5$ a basis called $r,g,b,u,d$.  We write 
wedge products simply as products, write $c$ for $r, g$ or $b$, and write $\cbar$ for 
$gb, br$ or $rg$.

\vskip 1em
	\begin{table}[h]
\begin{center} 
	\begin{tabular}{cccccc} 
		\toprule
		$\Lambda^0 \C^5$   & $\Lambda^1 \C^5$    & $\Lambda^2 \C^5$            & $\Lambda^3 \C^5$             & $\Lambda^4 \C^5$              & $\Lambda^5 \C^5$ \\ 
		\midrule
		$\nubar_L = 1$ & $e^+_R = u$     & $e^+_L = ud$            & $e^-_R = rgb$            & $e^-_L = drgb$            & $\nu_R = udrgb$ \\  \\
		               & $\nubar_R = d$  & $u^c_L = uc$            & $\ubar^\cbar_R = d\cbar$ & $\nu_L = urgb$            & \\  \\
			       & $d^c_R = c$     & $d^c_L = dc$            & $\dbar^\cbar_R = u\cbar$ & $\dbar^\cbar_L = ud\cbar$ & \\   \\
			       &                 & $\ubar^\cbar_L = \cbar$ & $u^c_R = udc$            &                           & \\   \\
	\vspace{-10pt}
	\end{tabular}
		\caption{First-generation fermions as exterior algebra elements.
      } \label{tab:su5code}
\end{center}
\end{table}

The statement of Theorem \ref{thm:gen} depends crucially on a chosen Cartan $\h \subset \e_7$ 
and also chosen weights $w_k$ for $\sl_3^\gen$.   Next we state a result that depends only on our 
choice of $\gsm \subset \e_7$.   With only this structure, there is no way to separate out 
and name the three generations, since the generation symmetry is unbroken: $\sl_3^\gen$ commutes with $\gsm$.   We shall see that $e_7$ splits into four parts: $\sl_3^\gen$, its centralizer
$\sl_6^\sm$, and two 45-dimensional representations of $\sl_3^\gen \oplus \sl_6^\sm$.
Taken together, these two 45-dimensional representations account for three generations of fermions 
and their antiparticles---omitting only the right-handed neutrinos and left-handed antineutrinos, 
which transform trivially under $\gsm$.  

To state this result, let $\textbf{3}$ and $\textbf{3}^\ast$ be the two irreducible 3-dimensional representations of $\sl_3^\gen$, and let $\textbf{15}$ and $\textbf{15}^\ast$ be the two 
irreducible 15-dimensional representations of $\sl_6^\sm$.    In each case, saying which representation 
is which depends on arbitrary conventions, since the Dynkin diagram of each Lie subalgebra of
$\e_7$ has a $\Z_2$ symmetry that exchanges dual representations.  

\begin{theorem} \label{thm:unified_gen}
As a representation of its maximal subalgebra $\sl_3^\gen \oplus \sl_6$, the Lie algebra
$\e_7$ decomposes into irreducible representations as follows:
\[        \e_7 = \sl_3^\gen \, \oplus \, \sl_6^\sm \, \oplus \, (\mathbf{3} \otimes \mathbf{15}^\ast) \,
\oplus \, (\mathbf{3}^\ast \otimes \mathbf{15}). \]
\end{theorem}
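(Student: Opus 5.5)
The plan is to work with the fixed Cartan $\h$ and the root partition \eqref{eq:partition}. By Proposition~\ref{prop:sl_6} we have the subalgebra $\sl_3^\gen \oplus \sl_6^\sm = \h \oplus \bigoplus_{r \in A \sqcup \Phi_0} (\e_7)_r$, of dimension $8+35 = 43$. Its complement is
\[ N \;=\; \bigoplus_{r \in \Phi_1 \sqcup \Phi_2 \sqcup \Phi_3} (\e_7)_r , \]
which has dimension $90$ by Lemma~\ref{lem:counts}. This space is a representation of $\sl_3^\gen \oplus \sl_6^\sm$, since the subalgebra acts on $\e_7$ and preserves itself, and a reductive algebra's module has an invariant complement. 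So the task reduces to identifying $N$.

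I would split $N$ by the sign of the projection to the generation plane. Set $N^+ = \bigoplus_k N_k^+$ and $N^- = \bigoplus_k N_k^-$, using the notation from the proof of Lemma~\ref{lem:phik}. Here we choose the signs of the $w_k$ once and for all, so that the projections appearing in $N^+$ are exactly $w_1, w_2, w_3$.

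First, check that each of $N^\pm$ is invariant under both summands. For $\sl_6^\sm$ this was already shown in Lemma~\ref{lem:phik}. For $\sl_3^\gen$, take $r$ with $\pi(r) = w_i$ and $\delta = w_j - w_i \in A$. If $r+\delta$ is a root, then $\pi(r+\delta) = w_j$, so it lies in $N^+$. Adding $\delta$ can never produce a projection of the form $-w_j$, because $w_i + w_j - w_i = w_j$ always. The Cartan obviously preserves these spaces.

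Next, read off the weights. On $N^+$ the $P$-component of each weight is one of $w_1, w_2, w_3$, each occurring $15$ times. By Lemma~\ref{lem:phik}, each $N_k^+$ is a copy of $\Lambda^2\C^6$ or $\Lambda^4\C^6$ under $\sl_6^\sm$.

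The main obstacle is showing that all three $N_k^+$ are the \emph{same} $\sl_6^\sm$-module, and that the action genuinely couples them as a tensor product. I would argue as follows. The root vectors $e_{w_j - w_i}$ of $\sl_3^\gen$ commute with $\sl_6^\sm$, so $\ad(e_{w_j-w_i})$ is an $\sl_6^\sm$-equivariant map $N_i^+ \to N_j^+$. This map is nonzero: for $r \in \Phi_i^+$ we have $\langle r, w_j - w_i\rangle = \langle w_i, w_j - w_i \rangle = -1$, so $r + (w_j - w_i)$ is a root, and in a simply laced algebra the bracket of the corresponding root vectors is nonzero. Since $N_i^+$ and $N_j^+$ are irreducible, Schur's lemma makes this map an isomorphism, so all three $N_k^+$ are isomorphic to a single $15$-dimensional irreducible module $U$.

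Then, as an $\sl_3^\gen \oplus \sl_6^\sm$-module, $N^+ \cong \operatorname{Hom}_{\sl_6}(U, N^+) \otimes U$. Here the multiplicity space $\operatorname{Hom}_{\sl_6}(U, N^+)$ is $3$-dimensional, carries weights $w_1, w_2, w_3$ under $\sl_3^\gen$, and so is the defining representation $\mathbf{3}$. Hence $N^+ \cong \mathbf{3} \otimes U$, which is irreducible as an outer tensor product of irreducibles. Since $N^-$ is spanned by the negatives of these roots, it is dual to $N^+$ via the Killing form, giving $N^- \cong \mathbf{3}^\ast \otimes U^\ast$.

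Finally, fix conventions so that $U = \mathbf{15}^\ast$. This is allowed because the paper notes that naming the pair $\mathbf{15}, \mathbf{15}^\ast$ is a convention. The total dimension is $8 + 35 + 45 + 45 = 133$, which confirms that the decomposition is complete.
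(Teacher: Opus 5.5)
Your argument is correct and follows the same route as the paper's outline. You split the complement of $\sl_3^\gen \oplus \sl_6^\sm$ into $N^\pm = \bigoplus_k N_k^\pm$ by the sign of the projection to $P$, take the $\sl_6^\sm$-structure of each $N_k^\pm$ from Lemma~\ref{lem:phik}, and read off the $\sl_3^\gen$-weights $w_1, w_2, w_3$.

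The difference is completeness. The paper asserts that the proof of Lemma~\ref{lem:phik} shows the three $N_k^+$ are \emph{equivalent} $\sl_6^\sm$-modules, but that proof only shows each one is $\Lambda^2\C^6$ or $\Lambda^4\C^6$. Your observation that $\ad(e_{w_j - w_i})$ is a nonzero $\sl_6^\sm$-equivariant map $N_i^+ \to N_j^+$ supplies exactly this missing step; in fact it carries the root basis bijectively via $r \mapsto r + w_j - w_i$. Your isotypic decomposition $N^+ \cong \mathrm{Hom}_{\sl_6^\sm}(U, N^+) \otimes U$ then replaces the paper's informal `copy of $\C^3$ with basis the $w_k$' with a genuine identification of the multiplicity space as $\mathbf{3}$. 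The paper's version buys brevity, leaning on Slansky's tables; yours is a self-contained proof.

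Two small repairs are needed. First, your check that $\sl_3^\gen$ preserves $N^+$ only treats $\delta = w_j - w_i$. For the other four roots $\delta \in A$, the projection $w_i + \delta$ is $2w_i - w_j$ or $-2w_l$ with $j, l \neq i$. Lemma~\ref{lem:trichotomy} excludes both, so those brackets vanish. More conceptually, $Q$, $w_1 + Q$ and $-w_1 + Q$ are the three distinct cosets of the root lattice $Q$ of $A$ in its weight lattice, which gives a $\Z_3$-grading of $\e_7$ preserved by the subalgebra.

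Second, the existence of \emph{some} invariant complement does not make your particular $N$ invariant. Instead, note that $N$ is the Killing-orthogonal complement of the subalgebra, or simply that $N = N^+ \oplus N^-$ with both summands checked invariant directly.
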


\begin{proof}
As a purely representation-theoretic result this is well known \cite[Table 52]{Slansky1981}, but we
outline a self-contained argument.    Clearly $\sl_3^\gen$ and $\sl_6^\sm$
are irreducible summands of $\e_7$ as a representation of $\sl_3^\gen \oplus \sl_6$.
By the proof of Lemma \ref{lem:phik}, the rest breaks up into two 45-dimensional summands,
\[    \textstyle{\bigoplus_{k = 1}^3} N_k^+ \; \oplus \; \textstyle{\bigoplus_{k = 1}^3} N_k^-  . \]
The proof also shows that $\sl_6^\sm$ has equivalent 15-dimensional irreducible
representations on each space $N_k^+$, and dual representations on each space $N_k^-$.   
On the other hand, $\sl_3^\gen$ acts irreducibly via its defining representation on the 
copy of $\C^3$ whose basis is the three roots $w_k$, and via the dual of the defining 
representation on the copy of $\C^3$ whose basis is the three roots $-w_k$.   
\end{proof}

By the proof of Theorem~\ref{thm:gen}, restricting one of the 15-dimensional irreducible 
representations of $\sl_6^\sm$ to $\sl_5^\sm$ gives $\Lambda^1 \C^5 \oplus \Lambda^2 \C^5$,
while the other gives $\Lambda^3 \C^5 \oplus \Lambda^4 \C^5$.  
Each 45-dimensional summand therefore contains a mixture of left-handed particles 
(meaning elements of $\Lambda^p \C^5$ with $p$ even, as shown in Table \ref{tab:su5code}) 
and right-handed particles (elements of $\Lambda^p \C^5$ with $p$ odd).

\section{Three generations in \texorpdfstring{$\e_7$}{e_7}}\label{sec:lambda}
 
We can also include right-handed neutrinos and their antiparticles in this setup. 
They behave unlike the other particles.  However, we can include them in such a
way that we get three copies of the Standard Model representation in $\e_7$.   

Theorem \ref{thm:gen} implies that unlike the other fermions of the $k$th generation, 
the $k$th right-handed neutrino cannot fit into the root spaces associated to $\Phi_k$.  
However, Nasmith \cite{Nasmith2020} noted that it can be fit into the root space associated to 
$\beta_k$, one of the six roots of the generation $\sl_3$.   Its antiparticle then naturally goes 
into the root space associated to $-\beta_k$.   Three copies of the Standard Model 
representation of $\gsm$ on $\Lambda \C^5$ then fill up all of $\e_7$ except for a copy
of $\sl_6 \oplus \C^2$.  The copy of $\sl_6$ here is $\sl_6^\gen$, while the copy of $\C^2$
is $\C \otimes P$, the complexification of the generation plane.

To implement this, for $k = 1,2,3$ let 
\begin{equation}\label{eq:Vk}
  V_k =  \bigoplus_{r \in \{\pm \beta_k\} \sqcup \Phi_k} (\e_7)_r.
\end{equation}
This space is 32-dimensional, and by Equations \eqref{eq:m_k} and \eqref{eq:intersection} we have
\begin{equation}
  \m_k = (\m_1 \cap \m_2 \cap \m_3) \oplus V_k .
\end{equation}
In fact $V_k$ captures one generation of Standard Model fermions and their antiparticles.

\begin{theorem}\label{thm:main}
As a representation of $\sl^\sm_6$,
\[
  V_k \;\cong\; \Lambda^0\C^6 \oplus \Lambda^2\C^6 \oplus \Lambda^4\C^6
                \oplus \Lambda^6\C^6 \;=\; \Lambda^{\mathrm{even}}\C^6.
\]
Consequently, as a representation of $\sl_5^\sm$ we have 
\[  V_k \;\cong\; \Lambda\C^5 ,\]
and as a representation of $\gsm$, $V_k$ is the Standard Model representation.
\end{theorem}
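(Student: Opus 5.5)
The plan is to split $V_k$ into its two pieces, the part coming from $\Phi_k$ and the part coming from $\pm\beta_k$, and to handle each as a representation of $\sl_6^\sm$ separately. First I check that $V_k$ is an $\sl_6^\sm$-submodule of $\e_7$. For $s \in \Phi_0$ and $r \in \Phi_k$ this was shown in the proof of Lemma~\ref{lem:phik}. For $r = \pm\beta_k$ and $s \in \Phi_0$ we have $s \perp \beta_k$, so $|s \pm \beta_k|^2 = 4$ and $s\pm\beta_k \notin \Phi$; hence $[(\e_7)_s,(\e_7)_{\pm\beta_k}] = 0$. The Cartan of $\sl_6^\sm$ is $\h \cap P^\perp$, and it acts on $(\e_7)_{\pm\beta_k}$ by $\ip{a}{\pm\beta_k} = 0$, since $\beta_k \in P$. (Equivalently, $\sl_6^\sm$ centralizes $\sl_3^\gen \ni e_{\pm\beta_k}$ by Proposition~\ref{prop:sl_6}.) So $(\e_7)_{\beta_k}$ and $(\e_7)_{-\beta_k}$ are each trivial one-dimensional $\sl_6^\sm$-modules, that is, copies of $\Lambda^0\C^6$ and $\Lambda^6\C^6$.

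Combining this with Lemma~\ref{lem:phik}, which gives $\bigoplus_{r\in\Phi_k}(\e_7)_r \cong \Lambda^2\C^6\oplus\Lambda^4\C^6$, yields
\[
V_k \cong \Lambda^0\C^6\oplus\Lambda^2\C^6\oplus\Lambda^4\C^6\oplus\Lambda^6\C^6 = \Lambda^{\mathrm{even}}\C^6 .
\]
The dimension check is $1+15+15+1 = 32$.

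Next I restrict to $\sl_5^\sm$. By Proposition~\ref{prop:sl_5} and its proof, $\C^6 = W \oplus L$ with $W \cong \C^5$ the defining representation of $\sl_5^\sm$ and $L$ trivial. Then
\[
\Lambda^p(\C^5\oplus\C) \cong \Lambda^p\C^5 \oplus \Lambda^{p-1}\C^5 .
\]
Summing over even $p$ gives, for $p=0,2,4,6$ respectively,
\[
\Lambda^0\C^5,\qquad \Lambda^2\C^5 \oplus \Lambda^1\C^5,\qquad \Lambda^4\C^5\oplus\Lambda^3\C^5,\qquad \Lambda^5\C^5 ,
\]
which together make up all of $\Lambda\C^5$. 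This is the same computation as in Theorem~\ref{thm:gen}, extended by the two trivial pieces. The extension is consistent because $\Lambda^0\C^5$ and $\Lambda^5\C^5$ are both trivial for $\sl_5$.

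Finally I restrict to $\gsm \subset \sl_5^\sm$. I need the inclusion $\gsm\subset\sl_5^\sm$ to be identified with the standard block-diagonal inclusion of Equation~\eqref{eq:GSM}. Under that identification the restriction is, by definition, the Standard Model representation. This identification is the main obstacle.

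I would handle it as follows. $\gsm$ is the maximal Levi subalgebra of $\sl_5^\sm$ obtained by removing $\alpha_4$. This is the root of the $\A_4$ diagram $\alpha_1\!-\!\alpha_3\!-\!\alpha_4\!-\!\alpha_2$ that splits it into $\A_2$ and $\A_1$. Maximal Levi subalgebras of $\sl_5$ of this type are conjugate to the block-diagonal $\mathfrak{s}(\mathfrak{gl}_3\oplus\mathfrak{gl}_2)$. Moreover, any automorphism of $\sl_5$, including the outer one, carries $\Lambda\C^5$ to an isomorphic representation (dualization permutes the $\Lambda^p$ as $p\mapsto 5-p$ and preserves the total). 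So the representation of $\gsm$ on $V_k\cong\Lambda\C^5$ is the Standard Model representation up to this automorphism, which is all that is claimed.
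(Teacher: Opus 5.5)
Your proof is correct and follows the paper's route: Lemma~\ref{lem:phik} together with the two trivial root spaces $(\e_7)_{\pm\beta_k}$ gives $\Lambda^{\mathrm{even}}\C^6$, and the branching $\C^6 = \C^5 \oplus \C$ then gives $\Lambda\C^5$. You also do two things the paper leaves implicit. You check that $(\e_7)_{\pm\beta_k}$ are trivial $\sl_6^\sm$-modules, and you justify the final identification with the Standard Model representation via the Levi description of $\gsm \subset \sl_5^\sm$ and the invariance of $\Lambda\C^5$ under automorphisms of $\sl_5$.
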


\begin{proof}
The first statement follows from Lemma~\ref{lem:phik} together with 
\[         (\e_7)_\beta \cong \Lambda^0 \C^6, \quad    (\e_7)_{-\beta} \cong \Lambda^6 \C^6.\]
For the second, split $\C^6 = \C^5 \oplus \C$ as $\sl_5$-modules. Since
$\Lambda(V_1 \oplus V_2) \cong \Lambda V_1 \otimes \Lambda V_2$, taking the even part and using
$\Lambda\C = \Lambda^0\C \oplus \Lambda^1\C$ gives
\[
  \Lambda^{\mathrm{even}}(\C^5 \oplus \C)
  \;\cong\; \bigl(\Lambda^{\mathrm{even}}\C^5 \otimes \Lambda^0\C\bigr)
     \oplus \bigl(\Lambda^{\mathrm{odd}}\C^5 \otimes \Lambda^1\C\bigr)
  \;\cong\; \Lambda^{\mathrm{even}}\C^5 \oplus \Lambda^{\mathrm{odd}}\C^5
  \;=\; \Lambda\C^5 . \qedhere
\]
\end{proof}

\begin{theorem}\label{thm:decomp}
We have a direct sum decomposition
\[
  \e_7 \;=\; \sl_6^\sm \, \oplus \, (\C \otimes P) \,\oplus\, V_1 \,\oplus\, V_2 \,\oplus\, V_3 .
\]
where each summand is an $\sl_6^\sm$-subrepresentation of $\e_7$, with $\sl_6^\sm$
acting by the $\e_7$ Lie bracket.
\end{theorem}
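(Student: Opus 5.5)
The plan is to read the decomposition off the root space decomposition of $\e_7$, using the partition of Equation \eqref{eq:partition}, and then check invariance summand by summand. First, write $\h = (\C \otimes P) \oplus \h_6$, where $\h_6$ is the orthogonal complement of $\C\otimes P$ in $\h$ with respect to the Killing form. By Proposition \ref{prop:sl_6}, $\h_6 = \h \cap P^\perp$ is the Cartan of $\sl_6^\sm$. Then split the roots as $\Phi = A \sqcup \Phi_0 \sqcup \Phi_1 \sqcup \Phi_2 \sqcup \Phi_3$, with $A = \{\pm\beta_1,\pm\beta_2,\pm\beta_3\}$. This gives
\[
  \e_7 = \h_6 \oplus (\C\otimes P) \oplus \bigoplus_{r\in\Phi_0}(\e_7)_r \oplus \bigoplus_{k=1}^3 \Bigl(\bigoplus_{r\in\{\pm\beta_k\}\sqcup\Phi_k}(\e_7)_r\Bigr).
\]
Grouping $\h_6$ with the $\Phi_0$ root spaces gives $\sl_6^\sm$ by Proposition \ref{prop:sl_6}. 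The $k$th bracketed term is $V_k$ by Equation \eqref{eq:Vk}. This establishes the direct sum decomposition as vector spaces. As a dimension check, $35 + 2 + 3\cdot 32 = 133$.

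Next I will show that each summand is stable under $\ad(\sl_6^\sm)$.
\begin{itemize}
\item $\sl_6^\sm$ is stable because it is a subalgebra.
\item $\C\otimes P$ is stable because $\sl_6^\sm$ is the centralizer of $\sl_3^\gen$ (Proposition \ref{prop:sl_6}), and $\C\otimes P$ is the Cartan of $\sl_3^\gen$. So $\sl_6^\sm$ brackets $\C\otimes P$ to zero, and $\C\otimes P$ is a trivial subrepresentation.
\item For $V_k$, the part $\bigoplus_{r\in\Phi_k}(\e_7)_r$ is an $\sl_6^\sm$-submodule by the argument in Lemma \ref{lem:phik}. Brackets with $\h_6$ preserve root spaces. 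For $s\in\Phi_0$ and $r\in\Phi_k$, $\pi(r+s)=\pi(r)$, so any root $r+s$ again lies in $\Phi_k$.
\item The root spaces $(\e_7)_{\pm\beta_k}$ lie in $\sl_3^\gen$, which commutes with $\sl_6^\sm$. So each is a trivial one-dimensional submodule, consistent with the identifications $\Lambda^0\C^6$ and $\Lambda^6\C^6$ in Theorem \ref{thm:main}.
\end{itemize}
Hence $V_k$ is a subrepresentation.

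I expect no step to be a real obstacle. The only point needing care is that $\C\otimes P$ really is the orthogonal complement of $\h_6$ in $\h$, so that the Cartan splits cleanly. This holds because the Killing form restricted to the real span $V$ of the roots is positive definite, and $\h = \C\otimes V$. Therefore $\C\otimes P$ and $\C\otimes(P^\perp\cap V)$ intersect trivially and together have dimension $2 + 5 = 7$. The rest follows directly from results already established: Lemma \ref{lem:trichotomy} for the partition of the roots, and Proposition \ref{prop:intersection} together with Equations \eqref{eq:m_k} and \eqref{eq:intersection}. In particular, the decomposition can alternatively be phrased as $\m_k = (\m_1\cap\m_2\cap\m_3)\oplus V_k$ summed over $k$, with the overlaps being exactly the triple intersection.
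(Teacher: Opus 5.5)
Your proof is correct and follows essentially the same route as the paper. Both arguments read the direct sum off the root space decomposition and the partition $\Phi = A \sqcup \Phi_0 \sqcup \Phi_1 \sqcup \Phi_2 \sqcup \Phi_3$, then check $\ad(\sl_6^\sm)$-invariance summand by summand using $\pi(r+s) = \pi(r)$ for $s \in \Phi_0$. The only minor difference is how you show $[\sl_6^\sm, (\e_7)_{\pm\beta_k}] = 0$: you deduce it from the fact that $\sl_6^\sm$ centralizes $\sl_3^\gen \supset (\e_7)_{\pm\beta_k}$ (Proposition \ref{prop:sl_6}), while the paper argues by root projections that $\beta_k + s$ cannot be a root; both are valid.
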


\begin{proof}
By Proposition \ref{prop:intersection} and Equation \eqref{eq:intersection} we have
\begin{equation}
\sl_6^\sm \oplus (\C \otimes P) = \h \oplus \; \bigoplus_{r \in \Phi_0}  (\e_7)_r.  
\end{equation}
By this formula and Equation \eqref{eq:Vk} it follows that the spaces $V_k$ are mutually
orthogonal and all orthogonal to $\sl_6^\sm \oplus (\C \otimes P)$, since by Equation 
\eqref{eq:Phi} the sets $\Phi_k$ ($k = 0,1,2,3$) and the sets $\{\pm \beta_k\}$ ($k = 1,2,3$)
are all disjoint.  Since the union of all these sets is the set $\Phi$ of $\e_7$ roots, we have
\[   \sl_6^\sm \oplus (\C \otimes P) \,\oplus\, V_1 \,\oplus\, V_2 \,\oplus\, V_3 \, = \, 
\h \oplus \, \bigoplus_{r \in \Phi} (\e_7)_r = \e_7 .\]
 
To show that the subspaces $V_k$ are preserved by $\sl_6^\sm$, 
first let $s \in \Phi_0$ and $r \in \Phi_k$: then $\pi(r+s) = \pi(r) = \pm w_k$, 
so if $r+s$ is a root it again lies in $\Phi_k$.  This shows that
\[  [(\e_7)_s,  \bigoplus_{r \in \Phi_k} (\e_7)_r ] \subseteq  \bigoplus_{r \in \Phi_k} (\e_7)_r,\]
so this direct sum is preserved by $\sl_6^\sm$.
Next let $s \in \Phi_0$ and consider $\beta_k$: if $\beta_k + s$ were a
root, its projection onto the generation plane would be $\beta_k$, of squared length $2$, forcing
$\beta_k + s \in P$ by the argument in Lemma~\ref{lem:trichotomy} and hence
$s = 0$.  Thus 
\[   [(\e_7)_s, (\e_7)_{\pm\beta_k}] = 0. \]
This shows that $V_k$ is preserved by $\sl_6^\sm$.

Clearly $\sl_6^\sm$ is preserved by bracketing with $\sl_6^\sm$.  Finally, 
$\C \otimes P$ is annihilated by bracketing with $\sl_6^\sm$ because it is spanned by the roots 
$\beta_k$, which lie in $\sl_3^\gen$, which commutes with $\sl_6^\sm$ by 
Proposition \ref{prop:sl_6}.
\end{proof}

The statement of this final theorem depends not only on the choice of a good $\gsm \subset \e_7$ but also a Cartan subalgebra $\h$ and three roots $\beta_k$ of $\sl_3^\gen$.   Theorem \ref{thm:unified_gen}, by comparison, depends only on $\gsm \subset \e_7$.   In the framework discussed here, the right-handed neutrinos and their antiparticles are fundamentally different from other fermions.  Not only do they transform trivially under $\gsm$, their 6-dimensional space cannot be
singled out from the 8-dimensional space $\sl_3^\gen$ without choosing a Cartan of that Lie algebra.
Proposition \ref{prop:sl_3} gives a 10-dimensional space of particles that transform trivially under
$\gsm$, and a natural way to pick out the 8-dimensional subspace $\sl_3^\gen$, but decomposing
it further requires extra choices.

\subsection*{Acknowledgements} 

I thank Benjamin Nasmith for bringing his work to my attention, and for helpful conversations.
I thank Curtis Laketek for catching an error.  
This work was done with help from Claude Opus 4.8: a friend gave me a subscription to 
Claude Pro and I wanted to test it out, despite my many misgivings, including how
LLMs are contributing to global warming and income inequality.   It was very good at
answering questions about Nasmith's paper, doing computations, and proving theorems.
When I asked it to summarize our discussions, it quickly produced a 10-page draft of this 
paper.  The proofs were terse and hard to understand.  For several weeks I checked, 
reorganized, expanded and completely rewrote this material.   The writing here is all my 
own, and any errors are mine alone. The process raised tough questions about how we should do 
mathematics.

\end{document}